\title[MoQ of finite-dimensional linear systems]{Measure of quality of finite-dimensional linear systems: a frame-theoretic view}
\author[Mishal Assif PK]{Mishal Assif PK}
\address{Department of Mechanical Engineering\\ IIT Bombay, Powai\\ Mumbai 400076, India.}
\author[M. Rayyan Sheriff]{Mohammed Rayyan Sheriff}
\address{Systems \& Control Engineering\\ IIT Bombay, Powai\\ Mumbai 400076, India.}
\author[D. Chatterjee]{Debasish Chatterjee}
\address{Systems \& Control Engineering\\ IIT Bombay, Powai\\ Mumbai 400076, India.}
\thanks{Emails: \texttt{mishal\_assif@iitb.ac.in, rayyan@iitb.ac.in, dchatter@iitb.ac.in.}}
\thanks{The authors thanks K.\ S.\ Mallikarjuna Rao, IIT Bombay, India, for a pointer to \cite{ref:Lio-94}, and M.\ Nagahara, University of Kitakyushu, Japan, and T.\ Ohtsuka, Kyoto University, Japan, for their encouraging comments on this topic and the associated results.}
\begin{document}

\begin{abstract}
	 A measure of quality of a control system is a quantitative extension of the classical binary notion of controllability. In this article we study the quality of linear control systems from a frame-theoretic perspective. We demonstrate that all LTI systems naturally generate a frame on their state space, and that three standard measures of quality involving the trace, minimum eigenvalue, and the determinant of the controllability Gramian achieve their optimum values when this generated frame is tight. Motivated by this, and in view of some recent developments in frame-theoretic signal processing, we propose a natural measure of quality for continuous time LTI systems based on a measure of tightness of the frame generated by it and then discuss some properties of this frame-theoretic measure of quality.
\end{abstract}

\maketitle

\section{Introduction}

\label{section:Review-of-some-conventional-MOQs-for-LTI-systems}

Let \(n\) and \(m\) be positive integers, and consider a linear time-invariant (LTI) system
\begin{equation}
\label{eq:lti-system}
\dot{x}(t) = A x(t) + B u(t) \quad \text{for \(t \in \R{}\),}
\end{equation}
where \( A \in \mat{n}{n} \) and \( B \in  \mat{n}{m} \) are given and fixed matrices. Let \( \horizon > 0 \) denote the time horizon of the system \eqref{eq:lti-system} for the control objectives that follow, and let \( \controlprofile \) denote the set of control maps \( u : [0, \horizon] \longrightarrow \R{m} \) that are square integrable. The \emph{reachability matrix} \( \kalman \) and the \emph{reachable space} \( \mathcal{R} \) of LTI system \eqref{eq:lti-system} are defined by:
\begin{align*}
\kalman & \coloneqq \pmat{B}{AB}{A^{n - 1}B}, \\
\mathcal{R} & \coloneqq \image ( \kalman ).
\end{align*}
Recall that the control system \eqref{eq:lti-system} is \emph{controllable} in the classical sense if, given any preassigned points of \( \bar x, \hat x\in\R{\sysdim} \), there exists a control \( u \in \controlprofile \) that can transfer the states of the system \eqref{eq:lti-system} from \( x(0) = \bar x \) to \( x(\horizon) = \hat x \). It is known that the LTI system \eqref{eq:lti-system} is controllable if and only if the matrix \( \kalman \) is of rank \( n \). Of course, there are analogues of controllability for nonlinear and stochastic systems, and each of these notions provides a certificate of whether the corresponding control system, locally or globally, is controllable or not.

In our everyday lives, in addition to knowing whether a control system is controllable or not, it is also important to understand \emph{how controllable} or \emph{how good} is the control system. Indeed, a person intending to purchase a car typically test-drives several models within the budget, and in addition to its efficiency, the ease of maneuverability of the car, its ability to handle tight corners at various speeds, etc., during the test-drives become important factors in arriving at the final selection. Similarly, a surgeon performing a robot-assisted surgery would naturally prefer the instruments to be as amenable but as precise as possible in order to maximize the success of the operation. However, no assessment of the ``extent'' of controllability of a given control system is provided by the classical ideas, and in this article we propose a natural and fundamental technique to do precisely that.

Intuitively speaking, any measure of controllability of systems should involve some important and innate characteristics of the system such as the average control energy or the control effort required to perform a certain class of manoeuvres, robustness to a class of disturbances, the ability to control the system with a class of sparse controls, etc. We shall observe below that such intuitive ideas are justified, and indeed, they are natural. Our measure of controllability relies on the theory of frames, an extremely popular topic in signal processing, and unifies and sheds new light on a plethora of controllability measures that have appeared in the literature so far. Indeed, we select several measures of controllability that have been proposed across several decades in \cite{ref:MulWeb-72}, \cite{ref:Lio-94}, \cite{pasqualetti2014controllability}, \cite{ref:SumCorLyg-16}, \cite{ref:ZhaCor-17}, and unify and derive new insights into all of them under a single umbrella framework. This particular ability to collect such diverse ideas under one umbrella points strongly to the fundamental character of our framework. In order to ensure a clean and simple exposition we shall limit our discussion to the context of LTI systems as in \eqref{eq:lti-system} above, and refer to our controllability measure as a \emph{Measure Of Quality} (MOQ) of an LTI control system.

A few words about frames are in order. Frames are, roughly speaking, overcomplete bases of Hilbert spaces. The property of overcompleteness ensures that the representation of vectors in terms of frames (as opposed to bases) leads to strong robustness properties of such representations that are useful in signal processing. The study of frames was initiated by Duffin and Schaeffer \cite{duffin1952class} and expanded greatly by work of Daubechies et al.\ in \cite{daubechies1986painless}. A particular class of frames, namely, tight frames, are of great importance in signal processing. Tight frames are minimizers of a certain potential function \cite{benedetto2003finite, casazza2006physical}, they possess several desirable properties. For example, representations of vectors in terms of tight frames exhibit better resilience to noise and quantization \cite{shen2006image, zhou2016adaptive}; tight frames are also known to be good for representing signals sparsely, and are the \( \ell_2 \)-optimal dictionaries for representing vectors that are uniformly distributed over spheres \cite[Proposition 2.13, p.\ 10]{ref:SheCha-17}. The results in the sequel will demonstrate that the theory of tight frames lead to important useful consequences in control theory, in particular, in the context of quantifying controllability of LTI systems.

The rest of the article unfolds as follows. In \secref{sec: frame theory} we introduce the basic definitions associated with LTI systems with an emphasis on the underlying frame-theoretic aspects. In particular, we show how a controllable LTI system naturally gives rise to a frame on it's state space. In \secref{sec: optimal control} we review the minimum energy optimal control problem for LTI systems. We then discuss three classical MOQs motivated by the minimum energy control problem that have appeared in literature before. In \secref{sec: optimization} we present our first contribution: we show that all the three classical MOQs introduced earlier attain their optimal value precisely when the frame generated by the LTI system is tight. In \secref{sec: frame MOQ}, motivated by the results obtained in the preceding section, we propose a new measure of quality for LTI systems based on how tight the frame generated by the LTI system is. We then discuss some properties of the proposed frame-theoretic measure of quality. 

\subsection*{Notation}
We employ standard notations in this article. The set of positive integers is denoted by \(\N\), the real numbers by \(\R{}\). For any positive integer \(\nu\) and a vector \(x\) in \(\R{\nu}\), we let \( x\transp \) denote its transpose and \( \norm{x} \) its standard Euclidean norm. We work with several different inner products in the sequel, and as a rule we distinguish the inner product on a vector space \(X\) as \(\inprod{\cdot}{\cdot}_X\), with the exception of the standard inner product \(\inprod{v}{v'} = v\transp v'\) on \(\R{\nu}\) that we leave without a subscript to avoid notational clutter. The norm induced by \(\inprod{\cdot}{\cdot}_X\) on $X$ is denoted by $\|\cdot\|_{X}$, once again with the exception of the standard Euclidean norm on \(\R{\nu}\). For a matrix \(M\) with real entries, \( \image(M)\) is its column space and \( \trace(M) \) is its trace. The set of \( \nu \times \nu \) symmetric positive definite and non-negative definite matrices are denoted by \( \posdef{\nu} \) and \( \possemdef{\nu} \), respectively. The \(\nu\times\nu\) identity matrix will be denoted by \(I_\nu\). For us $\ltwo{\R{\nu}}$ stands for the Hilbert space of square summable sequences taking values in $\R{\nu}$, i.e., \(\ltwo{\R{\nu}} \Let \aset[\Big]{ \alpha = (\alpha_i)_{i\in\N} \suchthat \alpha_i\in\R{\nu}\text{ for each }i \text{ and }\sum_{j=1}^{+\infty} \norm{\alpha_j}^2 < +\infty }\), equipped with the inner product \(\inprod{\alpha}{\alpha'}_{\ltwo{\R{\nu}}} \Let \sum_{i=1}^{+\infty} \inprod{\alpha_i}{\alpha'_i}\).

\section{Frame theory and Linear systems}
\label{sec: frame theory}

We recall the basic definitions related to frames in finite-dimensional Hilbert spaces:
\begin{definition}[{\cite[\S1.1]{ref:Chr-16}}]
	\label{d: frame}
	For a Hilbert space  \( H_n \) of dimension \(n\) and with an inner product \( \langle \cdot , \cdot \rangle_{H_n} \), a finite or countable collection of vectors \( (\ltiframeelement{i})_{i \in \indexset} \subset H_n \) is said to be a \emph{frame} of \( H_n \) if there exist constants \( 0 < c \leq C \) such that 
	\[
		c \norm{v}_{H_n}^2 \leq \sum_{i \in \indexset} \abs[\big]{ \inprod{v_i}{v}_{H_n} }^2 \leq C \norm{v}_{H_n}^2 \quad \text{for all \( v \in H_n \)}.
	\]
	A frame is said to be \emph{tight} if \( c = C \).
\end{definition}

We emphasize that the index set $\indexset$ in Definition \ref{d: frame} can be finite or countably infinite. Elementary arguments show that a countably infinite collection of vectors \( (\ltiframeelement{i})_{i \in \indexset} \) of vectors is a frame of \( H_n \) if and only if it is square summable and spans \( H_n \). However, for a finite collection of vectors to constitute a frame, it suffices that they span \( H_n \). 

\begin{figure}[H]
			\centering
			\subfloat[]{
			\begin{tikzpicture}
			\coordinate (A) at (0,0);
			\coordinate (B) at (0,1);
			\coordinate (C) at (-0.866, -0.5);
            \coordinate (D) at (0.866, -0.5);
            \coordinate (E) at (1,0);
            \coordinate (F) at (-0.707, 0.707);
			
			\draw [fill=blue] (A) circle (2pt) node [left] {};
			\draw [fill=blue] (B) circle (2pt) node [right] {};
			\draw [fill=blue] (C) circle (2pt) node [right] {};
			\draw [fill=blue] (D) circle (2pt) node [above] {};
			\draw [-latex, red, thick] (A) -- (B);
			\draw [-latex, red, thick] (A) -- (C);
            \draw [-latex, red, thick] (A) -- (D);
			
			\draw [fill=blue] (A) circle (2pt) node [left] {};
			\draw [fill=blue] (B) circle (2pt) node [right] {\( v_1 \)};
			\draw [fill=blue] (C) circle (2pt) node [right] {\( v_2 \)};
			\draw [fill=blue] (D) circle (2pt) node [above] {\( v_3 \)};
     		\draw [fill=blue] (E) circle (2pt) node [above] {};
     		\draw [fill=blue] (F) circle (2pt) node [above] {};
            
			\draw [-latex, red, thick] (A) -- (B);
			\draw [-latex, red, thick] (A) -- (C);
            \draw [-latex, red, thick] (A) -- (D);
            \draw [-latex, ForestGreen, thick] (A) -- (E);
            \draw [-latex, Blue, thick] (A) -- (F);
			\end{tikzpicture}
			}\qquad\qquad
			\subfloat[]{
			\begin{tikzpicture}
			\coordinate (A) at (0,0);{}
			\coordinate (B) at (1,-0.2);
			\coordinate (C) at (1.2,0.3);
            \coordinate (D) at (1,0.5);
            \coordinate (E) at (1,0);
            \coordinate (F) at (-0.707, 0.707);

			\draw [fill=blue] (A) circle (2pt) node [left] {};
			\draw [fill=blue] (A) circle (2pt) node [left] {};
			\draw [fill=blue] (B) circle (2pt) node [right] {\( v_1 \)};
			\draw [fill=blue] (C) circle (2pt) node [right] {\( v_2 \)};
     		\draw [fill=blue] (D) circle (2pt) node [above] {\( v_3 \)};
     		\draw [fill=blue] (E) circle (2pt) node [above] {};
     		\draw [fill=blue] (F) circle (2pt) node [above] {};
            
			\draw [-latex, red, thick] (A) -- (B);
			\draw [-latex, red, thick] (A) -- (C);
            \draw [-latex, red, thick] (A) -- (D);
            \draw [-latex, ForestGreen, thick] (A) -- (E);
            \draw [-latex, Blue, thick] (A) -- (F);
			\end{tikzpicture}
			}
			\caption{Examples of Frames in $\R{2}$. The red vectors in both figures constitute frames of $\R{2}$. The frame constituted by the red vectors in figure (A) is tight.}
\label{fig:frames-example}
\end{figure}
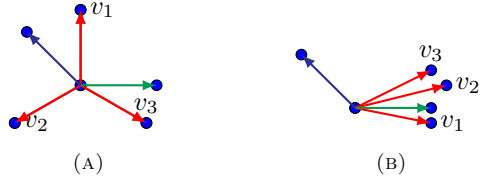

Intuitively, a frame is tight when the collection of vectors constituting the frame are as spread out in space as possible. For instance, consider the two sets of red vectors in $\R{2}$, each containing three elements as shown in Figure \ref{fig:frames-example}. Clearly, both these sets span $\R{2}$ and therefore constitute frames of $\R{2}$. However, the red-coloured vectors in Figure \ref{fig:frames-example} (A) are much more spread out than those in Figure \ref{fig:frames-example} (B). In Figure \ref{fig:frames-example} (B) the green-coloured vector is more or less aligned with all the three vectors of the frame and a large value of $C$ is obtained if the sum given in Definition \ref{d: frame} is computed with $v$ as the green-coloured vector. On the other hand, the blue-coloured vector in Figure \ref{fig:frames-example} (B) is almost orthogonal to all vectors of the frame, which leads to a small value of $c$ when the same computation is done with $v$ as the blue-coloured vector, thus resulting in a large gap between the values of $c$ and $C$. The same green and blue-coloured vectors shown in Figure \ref{fig:frames-example} (A) however are more or less equally aligned with respect to the frame vectors, and so one expects the gap in $c$ and $C$ to be much lesser. In fact, the red-coloured frame in Figure \ref{fig:frames-example} (A) is tight.

In the subsequent discussions in this article, we will almost exclusively deal with countably infinite frames and henceforth will use $\N$ as the index set instead of $\indexset$.

\begin{definition}
	\label{d:frame operators}
	For a given frame $\ftf$,
	\begin{itemize}[label=\(\circ\), leftmargin=*]
	\item the \textit{analysis operator} is defined by
		\[
			H_n \ni v\mapsto T_{ \ftf } (v) \coloneqq \big(\inprod{v_i}{v}_{H_n}\big)_{i \in \N} \in \ltwo{\R{}},
		\]
	\item the \textit{synthesis operator} (which is the adjoint of the analysis operator) is defined by
		\[
			\ltwo{\R{}} \ni \ltwoelement \mapsto T_{ \ftf }^{*} (\ltwoelement) \coloneqq \sum_{i=1}^{+\infty} \ltwoelement_i v_i \in H_n,
		\]
	\item the \textit{frame operator} is defined by
		\[
			H_n\ni v\mapsto G_{\ftf} (v) \coloneqq \sum_{i=1}^{+\infty} \inprod{v_i}{v}_{H_n}v_i \in H_n,
		\]
		which is the composition of the synthesis and the analysis operators (in that order).
\end{itemize}
\end{definition}
 It can be shown that whenever a sequence \( \ftf \) of vectors constitute a frame, the constants \( c \) and \( C \) are the smallest and the largest eigenvalues respectively of the corresponding frame operator. Thus, we see that whenever a frame is tight, the smallest and the largest eigenvalues of the corresponding frame operator are equal, which implies that all the eigenvalues are identical. Since the frame operator is self adjoint, we conclude that a frame is tight if and only if its frame operator is an appropriate multiple of the identity operator.

With this much background on frames, we recall that set of admissible controls in \eqref{eq:lti-system} is
\[
	\controlprofile = \aset[\bigg]{ u:[0, \horizon]\lra\R{m} \suchthat \int_{0}^{\horizon} \norm{u(t)}^2 \,\dd t < +\infty };
\]
it is a vector space, and is equipped with the natural inner product
\[
	\controlprofile \times \controlprofile \ni (u_1, u_2) \longmapsto \inprod{u_1}{u_2}_{\controlprofile} \Let \int_0^\horizon \inprod{u_1(t)}{u_2(t)} \,\dd t,
\]
with respect to which \(\controlprofile\) becomes a separable Hilbert space, for which the following assertions are classical:
\begin{proposition}[{\cite[Theorem 7.18, p.\ 139]{ref:Cla-13}}]
	\label{p: sep hilbert space}
	If \( \orthobasis \) is a countable orthonormal basis of the Hilbert space \( \controlprofile \), then for any $u \in \controlprofile$ we have
	\begin{equation*}
		u = \sum_{i=1}^{+\infty} \inprod{\orthobasiselement{i}}{u}_{\controlprofile} \orthobasiselement{i},
	\end{equation*}
	where the convergence is understood in the sense of the norm, and
	\[
		\norm{u}_{\controlprofile}^2 \Let \inprod{u}{u}_{\controlprofile} = \sum_{i = 1}^{+\infty} \abs[\big]{\inprod{\orthobasiselement{i}}{u}_{\controlprofile}}^2.
	\]
	Conversely, given any sequence $\alpha \Let (\alpha_i)_{i\in\N} \in \ltwo{\R{}}$, the series $\sum_{i = 1}^{+\infty} \alpha_i \orthobasiselement{i} $ converges to some element $u \in \controlprofile$ such that $\alpha_i = \inprod{\orthobasiselement{i}}{u}_{\controlprofile}$ for each \(i\) and $\norm{u}_{\controlprofile}^2 = \sum_{i = 1}^{+\infty} \alpha_i^2$. 
\end{proposition}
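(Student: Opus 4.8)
The statement is the classical abstract Fourier-series theorem for a separable Hilbert space, so the plan is to follow the standard route through Bessel's inequality, completeness of \(\controlprofile\), and the density of the span of \(\orthobasis\). First I would fix \(u \in \controlprofile\) and, for each \(N \in \N\), form the partial sum \(s_N \Let \sum_{i=1}^{N} \inprod{\orthobasiselement{i}}{u}_{\controlprofile}\,\orthobasiselement{i}\). Expanding \(\norm{u - s_N}_{\controlprofile}^2 \geq 0\) and using the orthonormality relations \(\inprod{\orthobasiselement{i}}{\orthobasiselement{j}}_{\controlprofile} = \delta_{ij}\) yields Bessel's inequality \(\sum_{i=1}^{N} \abs{\inprod{\orthobasiselement{i}}{u}_{\controlprofile}}^2 \leq \norm{u}_{\controlprofile}^2\); letting \(N \to \infty\) shows the scalar series \(\sum_{i} \abs{\inprod{\orthobasiselement{i}}{u}_{\controlprofile}}^2\) converges.

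Next I would establish convergence of the vector series. For \(M > N\), orthonormality gives \(\norm{s_M - s_N}_{\controlprofile}^2 = \sum_{i=N+1}^{M} \abs{\inprod{\orthobasiselement{i}}{u}_{\controlprofile}}^2\), which tends to \(0\) by the convergence just obtained, so \((s_N)_{N}\) is Cauchy; completeness of \(\controlprofile\) then produces a limit \(\tilde u \in \controlprofile\). To identify \(\tilde u\) with \(u\) I would test against each basis vector: continuity of the inner product gives \(\inprod{\orthobasiselement{j}}{\tilde u}_{\controlprofile} = \lim_{N} \inprod{\orthobasiselement{j}}{s_N}_{\controlprofile} = \inprod{\orthobasiselement{j}}{u}_{\controlprofile}\) for every \(j\), whence \(u - \tilde u\) is orthogonal to every \(\orthobasiselement{j}\). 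This is the one place where being a \emph{basis}, rather than merely an orthonormal system, is essential: the closed span of \(\orthobasis\) is all of \(\controlprofile\), so a vector orthogonal to every \(\orthobasiselement{j}\) must vanish, giving \(u = \tilde u\) and the claimed expansion. Parseval's identity then follows from continuity of the norm: \(\norm{u}_{\controlprofile}^2 = \lim_{N} \norm{s_N}_{\controlprofile}^2 = \sum_{i=1}^{+\infty} \abs{\inprod{\orthobasiselement{i}}{u}_{\controlprofile}}^2\).

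For the converse (the Riesz--Fischer direction), given \(\alpha \Let (\alpha_i)_{i \in \N} \in \ltwo{\R{}}\) I would mirror the argument: the partial sums \(t_N \Let \sum_{i=1}^{N} \alpha_i \orthobasiselement{i}\) satisfy \(\norm{t_M - t_N}_{\controlprofile}^2 = \sum_{i=N+1}^{M} \alpha_i^2\), which is small for large \(N\) precisely because \(\alpha \in \ltwo{\R{}}\); completeness of \(\controlprofile\) yields a limit \(u \in \controlprofile\). Testing against \(\orthobasiselement{j}\) and using continuity of the inner product gives \(\inprod{\orthobasiselement{j}}{u}_{\controlprofile} = \lim_{N} \inprod{\orthobasiselement{j}}{t_N}_{\controlprofile} = \alpha_j\), and \(\norm{u}_{\controlprofile}^2 = \lim_{N} \norm{t_N}_{\controlprofile}^2 = \sum_{i=1}^{+\infty} \alpha_i^2\) again by continuity of the norm.

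The only genuine subtlety---the main obstacle---is the identification step \(u = \tilde u\), as that is where completeness of the orthonormal system (the spanning/density property encoded in the word \emph{basis}) enters; every other step is a routine consequence of orthonormality together with the completeness of \(\controlprofile\) as a Hilbert space. Since the whole argument invokes only these two structural features, it is entirely standard and is exactly the content of the cited \cite[Theorem 7.18]{ref:Cla-13}.
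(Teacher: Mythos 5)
Your proof is correct and is the standard argument (Bessel's inequality, Cauchy partial sums via orthonormality, completeness of \(\controlprofile\), density of the span to identify the limit, and the Riesz--Fischer direction). The paper itself offers no proof of this proposition --- it is stated as a classical fact with a citation to \cite[Theorem 7.18]{ref:Cla-13} --- and your argument is exactly the textbook proof being referenced, so there is nothing to reconcile.
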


For the linear system \eqref{eq:lti-system} we define the \emph{end-point mapping} at time $\horizon$ by
\begin{equation}\label{end point map}
	\controlprofile \ni u \longmapsto \EPmap{\horizon}(u) \coloneqq \int_{0}^{\horizon} \exp{(\horizon - t)A}Bu(t)\, \dd t\in\R{\sysdim}.
\end{equation} 
It is well known that the image under the end-point map of any element $u \in \controlprofile$ is precisely the state to which the LTI system \eqref{eq:lti-system} is transferred at time $\horizon$ by the control signal $t\mapsto u(t)$ when initialized at the origin of the state space \( \R{\sysdim} \) at time $0$. The map \(\EPmap{\horizon}\) is, clearly, a continuous linear map from the Hilbert space $\controlprofile$ into $\R{\sysdim}$. We now establish that for a countable orthonormal basis \(\orthobasis\) of \(\controlprofile\), the sequence of vectors \( \big( \EPmap{\horizon} ( \orthobasiselement{i} ) \big)_{i \in \mathbb{N} } \) constitutes a \emph{frame} of \( \R{n} \) if and only if the LTI system \eqref{eq:lti-system} is controllable:

\begin{theorem}
	\label{th: lti frame}
	Let \( \orthobasis \) be any orthonormal basis of \( \controlprofile \), and let us define
	\begin{equation}
		\label{eq: lti frame}
		\ltiframeelement{i} \Let \EPmap{\horizon}(\orthobasiselement{i}) \quad \text{for } i \in \N.
	\end{equation}
	If \( u \in \controlprofile \), then
	\begin{equation}
		\label{l2 ep map}
		\EPmap{\horizon}(u) = \sum_{i=1}^{+\infty} \inprod{\orthobasiselement{i}}{u}_{\controlprofile} \ltiframeelement{i}.
	\end{equation}
	Moreover, the sequence \( (\ltiframeelement{i})_{i \in \N} \subset\R{\sysdim}\) satisfies the following properties:
		\begin{enumerate}[label=\textup{(\roman*)}, leftmargin=*, widest=iii, align=right]
			\item \label{th: lti frame:1} \(\sum_{i=1}^{+\infty} \norm{v_i}^2 < +\infty\),
			\item \label{th: lti frame:2} there exists \(C > 0\) such that \(\sum_{i=1}^{+\infty} \abs{\inprod{\ltiframeelement{i}}{v}}^2 \leq C\|v\|^2\) for all \(v \in \R{\sysdim}\), and
			\item \label{th: lti frame:3} the linear operator $G_{\ftf}$ (as defined in Definition \ref{d:frame operators}) is continuous.
	\end{enumerate}
	Furthermore, $(\ltiframeelement{i})_{i \in \N}$ is a frame of $\R{\sysdim}$ if and only if the LTI system \eqref{eq:lti-system} is controllable.
\end{theorem}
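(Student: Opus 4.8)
The plan is to funnel every assertion through the adjoint $\EPmap{\horizon}^{*}\colon\R{\sysdim}\lra\controlprofile$ of the bounded linear end-point map and through the single scalar identity $\sum_{i}\abs{\inprod{\ltiframeelement{i}}{v}}^{2}=\inprod{W_{\horizon}v}{v}$, where $W_{\horizon}\coloneqq\int_{0}^{\horizon}\exp{(\horizon-t)A}\,BB\transp\exp{(\horizon-t)A\transp}\,\dd t$ is the controllability Gramian. First I would establish the expansion \eqref{l2 ep map}: by Proposition \ref{p: sep hilbert space} every $u\in\controlprofile$ equals $\sum_{i}\inprod{\orthobasiselement{i}}{u}_{\controlprofile}\,\orthobasiselement{i}$ with convergence in the norm of $\controlprofile$, and since $\EPmap{\horizon}$ is continuous and linear it may be applied term by term to yield $\EPmap{\horizon}(u)=\sum_{i}\inprod{\orthobasiselement{i}}{u}_{\controlprofile}\,\EPmap{\horizon}(\orthobasiselement{i})=\sum_{i}\inprod{\orthobasiselement{i}}{u}_{\controlprofile}\,\ltiframeelement{i}$, which is precisely \eqref{l2 ep map}.

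Next I would compute the adjoint: evaluating $\inprod{\EPmap{\horizon}(u)}{w}$ for $w\in\R{\sysdim}$ and moving the Euclidean inner product inside the integral in \eqref{end point map} identifies $(\EPmap{\horizon}^{*}w)(t)=B\transp\exp{(\horizon-t)A\transp}\,w$, whence $\EPmap{\horizon}\EPmap{\horizon}^{*}=W_{\horizon}$. Parseval's relation for the orthonormal basis $\orthobasis$, applied to the vector $\EPmap{\horizon}^{*}v\in\controlprofile$, then gives for every $v\in\R{\sysdim}$
\[
	\sum_{i}\abs[\big]{\inprod{\ltiframeelement{i}}{v}}^{2}
	=\sum_{i}\abs[\big]{\inprod{\orthobasiselement{i}}{\EPmap{\horizon}^{*}v}_{\controlprofile}}^{2}
	=\norm{\EPmap{\horizon}^{*}v}_{\controlprofile}^{2}
	=\inprod{W_{\horizon}v}{v}.
\]
All three listed properties drop out of this identity. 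Property \ref{th: lti frame:2} holds with $C=\norm{\EPmap{\horizon}}^{2}$, the largest eigenvalue of $W_{\horizon}$, since $\inprod{W_{\horizon}v}{v}\le C\norm{v}^{2}$. Property \ref{th: lti frame:1} follows by evaluating the identity at the $\sysdim$ standard basis vectors of $\R{\sysdim}$ and summing, giving $\sum_{i}\norm{\ltiframeelement{i}}^{2}=\trace(W_{\horizon})<+\infty$. For property \ref{th: lti frame:3}, the Cauchy--Schwarz inequality together with \ref{th: lti frame:1} and \ref{th: lti frame:2} shows the defining series of $G_{\ftf}$ converges, and polarizing the same identity gives $\inprod{G_{\ftf}(v)}{w}=\inprod{W_{\horizon}v}{w}$, so $G_{\ftf}$ coincides with the matrix $W_{\horizon}$ and is in particular continuous.

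Finally I would settle the equivalence. As $\controlprofile$ is infinite dimensional, $(\ltiframeelement{i})_{i\in\N}$ is genuinely countably infinite, so by the criterion recalled just after Definition \ref{d: frame} it is a frame of $\R{\sysdim}$ if and only if it is square summable and spans $\R{\sysdim}$. Square summability is exactly property \ref{th: lti frame:1}, so the frame property reduces to $\mathrm{span}\{\ltiframeelement{i}: i\in\N\}=\R{\sysdim}$. Since $\R{\sysdim}$ is finite dimensional this span is closed; by \eqref{l2 ep map} it contains $\image(\EPmap{\horizon})$, while each $\ltiframeelement{i}=\EPmap{\horizon}(\orthobasiselement{i})$ belongs to $\image(\EPmap{\horizon})$, so the span equals $\image(\EPmap{\horizon})$. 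Invoking the classical fact that $\image(\EPmap{\horizon})=\image(W_{\horizon})=\image(\kalman)=\mathcal{R}$ for every $\horizon>0$, the frame condition becomes $\mathcal{R}=\R{\sysdim}$, i.e.\ $\operatorname{rank}\kalman=\sysdim$, which is precisely controllability.

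The step I expect to demand the most care is this last chain of subspace identifications: legitimately passing the continuous map $\EPmap{\horizon}$ through the infinite orthonormal expansion, using finite-dimensionality to guarantee that the span is closed (so that it equals $\image(\EPmap{\horizon})$ rather than merely its closure), and invoking the horizon-independence $\image(\EPmap{\horizon})=\mathcal{R}$ of the reachable subspace. By contrast, the analytic parts---convergence of the frame-operator series and boundedness of the adjoint---become routine once the Gramian identity $\sum_{i}\abs{\inprod{\ltiframeelement{i}}{v}}^{2}=\inprod{W_{\horizon}v}{v}$ is in hand.
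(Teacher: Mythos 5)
Your proof is correct, and it follows the same overall skeleton as the paper's: term-by-term application of the continuous linear map \(\EPmap{\horizon}\) for \eqref{l2 ep map}, a trace argument via the adjoint for \ref{th: lti frame:1}, summability estimates for \ref{th: lti frame:2} and \ref{th: lti frame:3}, and the ``square-summable plus spanning'' criterion for the final equivalence. The genuine difference is organizational: you compute \(\adjoint{\EPmap{\horizon}}\) explicitly and funnel everything through the single identity \(\sum_{i}\abs{\inprod{\ltiframeelement{i}}{v}}^{2}=\inprod{\congrammian v}{v}\), whereas the paper's proof stays abstract at this stage (it proves \ref{th: lti frame:1} from \(\trace(\adjoint{\EPmap{\horizon}}\EPmap{\horizon})=\trace(\EPmap{\horizon}\adjoint{\EPmap{\horizon}})\) without ever writing down the Gramian, gets \ref{th: lti frame:2} from Cauchy--Schwarz with the cruder constant \(C=\sum_i\norm{\ltiframeelement{i}}^2\), and only identifies \(G_{\ftf}\) with \(\congrammian\) later, in Proposition \ref{th: grammian-frame-eq} and the remark following it). Your route buys a sharper frame bound \(C=\lambda_{\max}(\congrammian)\), effectively absorbs Proposition \ref{th: grammian-frame-eq} into this proof, and spells out the subspace identifications \(\Span\ftf=\image(\EPmap{\horizon})=\image(\kalman)\) that the paper leaves implicit in its one-line conclusion of the final equivalence; the paper's route keeps this theorem self-contained and purely frame-theoretic, deferring the Gramian connection to where it is thematically needed. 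Both arguments are complete and valid.
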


\begin{proof}
	We observe that since $\EPmap{\horizon}: \controlprofile \lra \R{\sysdim}$ is a continuous linear map, the property \eqref{l2 ep map} follows at once.

	We start with \ref{th: lti frame:1}. Continuity of the end point map \( \EPmap{\horizon} \) ensures the existence of a well-defined linear adjoint map $\adjoint{\EPmap{\horizon}}: \R{\sysdim} \lra \controlprofile$ that satisfies
	\begin{equation*}
		\inprod{\adjoint{\EPmap{\horizon}}(v)}{u}_{\controlprofile} = \inprod{v}{\EPmap{\horizon}(u)} \quad \text{ for all } v \in \R{\sysdim} \text{ and } u \in \controlprofile.
	\end{equation*}
	We immediately observe that
	\begin{align*}
		\trace(\adjoint{\EPmap{\horizon}} \EPmap{\horizon}) &= \sum_{i=1}^{+\infty} \inprod{\orthobasiselement{i}}{\adjoint{\EPmap{\horizon}} \EPmap{\horizon}(\orthobasiselement{i})}_{\controlprofile} \\
		&= \sum_{i=1}^{+\infty} \inprod{\EPmap{\horizon}(\orthobasiselement{i})} {\EPmap{\horizon}(\orthobasiselement{i})} \\
		&= \sum_{i=1}^{+\infty} \|\ltiframeelement{i}\|^2,
	\end{align*}
	and moreover it is easily verified that:
	\[
		\trace(\adjoint{\EPmap{\horizon}} \EPmap{\horizon}) = \trace(\EPmap{\horizon} \adjoint{\EPmap{\horizon}}) .
	\]
	However, since $\EPmap{\horizon} \adjoint{\EPmap{\horizon}}$ is a linear transformation on a finite dimensional Hilbert space, we obtain the following
	\[
	\sum_{i=1}^{+\infty} \norm{\ltiframeelement{i}}^2 = \trace(\adjoint{\EPmap{\horizon}} \EPmap{\horizon}) = \trace(\EPmap{\horizon} \adjoint{\EPmap{\horizon}}) < \infty.
	\] 
This proves \ref{th: lti frame:1}.

	\ref{th: lti frame:2} follows immediately from the Cauchy-Bunyakovsky-Schwartz inequality:
	\[
		\sum_{i=1}^{+\infty} \abs{\inprod{\ltiframeelement{i}}{v}}^2 \leq \biggl( \sum_{i=1}^{+\infty} \|\ltiframeelement{i}\|^2 \biggr) \|v\|^2.
	\]

	We establish \ref{th: lti frame:3} by showing that for each $v \in \R{\sysdim}$ the infinite sum in the definition of $G_{(\ltiframeelement{i})_{i\in\N}}$ is absolutely summable and hence convergent to a well-defined limit in $\R{\sysdim}$. Indeed, by continuity of the norm and the Cauchy-Bunyakovsky-Schwarz inequality,
	\begin{align*}
		\norm{ \sum_{i=1}^{+\infty} \inprod{\ltiframeelement{i}}{v}\ltiframeelement{i} } &\leq \sum_{i=1}^{+\infty} \norm{v} \norm{\ltiframeelement{i}}^2 = \norm{ v } \biggl( \sum_{i=1}^{+\infty}\norm{ \ltiframeelement{i} }^2 \biggr),
	\end{align*}
	which shows continuity of the map \(G_{(\ltiframeelement{i})_{i\in\N}}\).

	We prove the final statement by recalling our earlier observation that a sequence of vectors is a frame of $\R{\sysdim}$ if and only if the sequence is square summable and they span $\R{\sysdim}$. We conclude that the LTI system \eqref{eq:lti-system} is controllable if and only if $\Span(\ltiframeelement{i})_ {i \in \N} = \R{\sysdim}$ as asserted.
\end{proof}

\section{Optimal Control and the Classical MOQs}
\label{sec: optimal control}

We define the \emph{control effort} $J: \controlprofile \lra \R{}$ of a control \( u \in \controlprofile \) to be
\[
	J(u) \coloneqq \int_{0}^{\horizon} \inprod{u(t)}{u(t)} \, \dd t.
\]
This particular control effort is of great practical relevance since it is the energy required to drive the system with the control \(t\mapsto u(t)\). It is, therefore, quite natural to minimize this control effort. For a given \( x \in \R{n} \) we consider the following optimal control problem:
\begin{equation}
\label{eq:optimal-control-problem}
	\left\{
	\begin{aligned}
		& \minimize_{u \in \controlprofile} &&  J(u)\\
		&\sbjto && \EPmap{T}(u) = x.
	\end{aligned}
	\right.
\end{equation}
The optimal control problem \eqref{eq:optimal-control-problem} is well-studied in classical optimal control theory, and its solution can be described analytically in terms of the \emph{controllability Gramian} at time $\horizon$ defined by
\begin{equation}
	\label{eq:control-grammian}
	\congrammian \coloneqq \int_{0}^{\horizon} \exp{tA}BB\transp\exp{t A\transp} \,\dd t. 
\end{equation}
Classical results \cite[Chapter 22]{ref:Cla-13} guarantee that there exists a unique solution of \eqref{eq:optimal-control-problem} under controllability of \eqref{eq:lti-system}, and the optimal control \( u^x \) and the optimal cost \( J^x \) that solve \eqref{eq:frame-optimal-control-problem} are given by
\begin{equation}
\label{eq:optimal-control-solution}
\begin{aligned}
	u^x (t) & = B\transp \exp{(\horizon - t)A\transp} \congrammian^{-1}x  \quad\text{for \(t\in[0, \horizon]\), and}\\
	J^x  & =  \inprod{x}{\congrammian^{-1} x}.
\end{aligned}
\end{equation}

\begin{remark}
	The similarity in our notation between the controllability Gramian and the frame operator is intentional, the motivation for which we shall see clearly in Proposition \ref{th: grammian-frame-eq} below.
\end{remark}

From Proposition \ref{p: sep hilbert space} we recall that every control \( u \in \controlprofile \) is uniquely determined a sequence in \( \ltwo{\R{}} \). Therefore, in view of \eqref{l2 ep map}, we recast the optimal control problem \eqref{eq:optimal-control-problem} completely in terms of the sequences in \( \ltwo{\R{}} \) that describe the respective control functions.
\begin{equation}
	\label{eq:frame-optimal-control-problem}
	\left\{
	\begin{aligned}
		& \minimize_{\ltwoelement \in \ltwo{\R{}}} && \sum_{i=1}^{+\infty} \ltwoelement_i^2\\
		& \sbjto && \sum_{i=1}^{+\infty} \ltwoelement_i \ltiframeelement{i} = x,
	\end{aligned}
	\right.
\end{equation}
which is equivalent to
\begin{equation}
	\label{eq:frame-optimal-control-problem-2}
	\left\{
	\begin{aligned}
		& \minimize_{\ltwoelement \in \ltwo{\R{}}} && \inprod{\ltwoelement}{\ltwoelement}_{\ltwo{\R{}}} \\
		& \sbjto && T_{\ftf}^{*}(\ltwoelement) = x.
	\end{aligned}
	\right.
\end{equation}
The equivalence between the problems \eqref{eq:optimal-control-problem} and \eqref{eq:frame-optimal-control-problem-2} gives us the following result:

\begin{proposition}
	\label{th: grammian-frame-eq}
	If $(\ltiframeelement{i})_{i \in \N}$ is the frame in $\R{\sysdim}$ generated according to \eqref{eq: lti frame} by the LTI system \eqref{eq:lti-system} at time $\horizon$, then
	\begin{equation}
		\label{eq:grammian-frame-operator-equivalence}
		G_{(\ltiframeelement{i})_{i \in \N}} = G_{(A, B,\horizon)}.
	\end{equation}
\end{proposition}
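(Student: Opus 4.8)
The plan is to recognize the frame operator $G_{(\ltiframeelement{i})_{i\in\N}}$ as the self-adjoint composition $\EPmap{\horizon}\adjoint{\EPmap{\horizon}}$ of the end-point map and its adjoint, and then to evaluate this composition directly and identify it with the controllability Gramian. The attractive feature of this route is that the operator $\EPmap{\horizon}\adjoint{\EPmap{\horizon}}$ makes no reference to the orthonormal basis $\orthobasis$ used to generate the frame, which is exactly the basis-independence that the statement implicitly asserts.

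First I would rewrite a single summand of the frame operator by means of the adjoint relation. For any $v\in\R{\sysdim}$ and $i\in\N$, since $\ltiframeelement{i}=\EPmap{\horizon}(\orthobasiselement{i})$, symmetry of the real inner products gives
\[
\inprod{\ltiframeelement{i}}{v} = \inprod{\EPmap{\horizon}(\orthobasiselement{i})}{v} = \inprod{\orthobasiselement{i}}{\adjoint{\EPmap{\horizon}}(v)}_{\controlprofile},
\]
so that
\[
G_{(\ltiframeelement{i})_{i\in\N}}(v) = \sum_{i=1}^{+\infty}\inprod{\orthobasiselement{i}}{\adjoint{\EPmap{\horizon}}(v)}_{\controlprofile}\,\ltiframeelement{i}.
\]
The right-hand side is precisely the series appearing in the reconstruction formula \eqref{l2 ep map}, evaluated at the control $u=\adjoint{\EPmap{\horizon}}(v)\in\controlprofile$. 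Invoking \eqref{l2 ep map} therefore collapses the sum and yields $G_{(\ltiframeelement{i})_{i\in\N}}(v)=\EPmap{\horizon}\big(\adjoint{\EPmap{\horizon}}(v)\big)$, i.e. $G_{(\ltiframeelement{i})_{i\in\N}}=\EPmap{\horizon}\adjoint{\EPmap{\horizon}}$.

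Next I would compute the adjoint explicitly. Transposing $B$ and $\exp{(\horizon-t)A}$ inside the inner product that defines $\adjoint{\EPmap{\horizon}}$ shows that $\adjoint{\EPmap{\horizon}}(v)$ is the control $t\mapsto B\transp\exp{(\horizon-t)A\transp}v$. Substituting this into the definition of the end-point map \eqref{end point map} gives
\[
\EPmap{\horizon}\adjoint{\EPmap{\horizon}}(v) = \left(\int_0^\horizon \exp{(\horizon-t)A}BB\transp\exp{(\horizon-t)A\transp}\,\dd t\right)v,
\]
and the change of variables $s=\horizon-t$ turns the matrix in parentheses into $\congrammian$. Combining the two identities then yields $G_{(\ltiframeelement{i})_{i\in\N}}=\congrammian=G_{(A,B,\horizon)}$, as claimed.

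I do not expect a serious obstacle: once the identification $G_{(\ltiframeelement{i})_{i\in\N}}=\EPmap{\horizon}\adjoint{\EPmap{\horizon}}$ is in hand, the rest is a rearrangement and a change of variable. The one point that deserves genuine care is the passage from the basis-dependent sum to the basis-free operator, since collapsing the series is legitimate only because \eqref{l2 ep map} applies to the specific element $\adjoint{\EPmap{\horizon}}(v)$; the convergence underwriting this manipulation is exactly the square-summability and continuity established in Theorem \ref{th: lti frame}. This is also what guarantees, a posteriori, that the frame operator does not depend on the orthonormal basis chosen to generate the frame.
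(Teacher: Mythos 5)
Your proof is correct, but it follows a genuinely different route from the one the paper designates as the proof of Proposition \ref{th: grammian-frame-eq}. The paper argues through the optimal control problem: it reduces the claim to the identity \( \inprod{x}{G_{\ftf}^{-1}x} = \inprod{x}{\congrammian^{-1}x} \) via the polarization identity, identifies the left-hand side with the optimal value of the minimum-norm problem \eqref{eq:frame-optimal-control-problem-2} using the classical formula \( \inprod{x}{(T_{\ftf}T_{\ftf}^{*})^{-1}x} \), and identifies the right-hand side with the optimal cost \( J^x \) of \eqref{eq:optimal-control-problem}. That argument requires controllability (both operators must be invertible) and leans on the cited minimum-norm machinery, but it has the conceptual payoff of exhibiting the Gramian identity as a restatement of the equivalence between the two optimal control problems. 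Your argument instead establishes the operator identity \( G_{\ftf} = \EPmap{\horizon}\adjoint{\EPmap{\horizon}} \) directly by collapsing the defining series with \eqref{l2 ep map} applied to \( u = \adjoint{\EPmap{\horizon}}(v) \), and then computes \( \EPmap{\horizon}\adjoint{\EPmap{\horizon}} = \congrammian \) by a change of variables. This is essentially the computation the paper itself records in the remark immediately following the proposition (there the control \( z(t) = B\transp\exp{(\horizon-t)A\transp}x \) is exactly your \( \adjoint{\EPmap{\horizon}}(x) \)); what it buys is that no invertibility is needed, so the identity holds even when \eqref{eq:lti-system} is not controllable — a strengthening the paper only obtains in that remark. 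Your closing caveat about the legitimacy of collapsing the series is well placed and is covered by the square-summability and continuity established in Theorem \ref{th: lti frame}.
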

\begin{proof}
	Since both $G_{(\ltiframeelement{i})_{i \in \N}}$ and $G_{(A, B,\horizon)}$ are symmetric positive definite matrices, it is enough to show that 
	\begin{equation*}
		\inprod{x}{G_{\ftf}^{-1} x} = \inprod{x}{G_{(A, B,\horizon)}^{-1} x} \quad \text{for all } x \in \R{\sysdim},
	\end{equation*}
	for then a standard argument involving the polarization identity suffices to conclude that
	\[
		\inprod{x}{G_{\ftf}^{-1} y} = \inprod{x}{G_{(A, B,\horizon)}^{-1} y} \quad \text{for all } x, y \in \R{\sysdim},
	\]
	which in turn implies the assertion immediately.\footnote{Recall that the polarization identity states that \(\inprod{x}{Ay} + \inprod{y}{Ax} = \frac{1}{2}\bigl( \inprod{x+y}{A(x+y)} + \inprod{x-y}{A(x-y)} \bigr)\) for all \(x, y\in\R{\nu}\) and any matrix \(A\in\R{\nu\times\nu}\). If \(A\) is symmetric, then \(\inprod{x}{Ay} = \inprod{y}{Ax}\), which shows that \(\inprod{x}{Ay} = 0\) whenever \(\inprod{x+y}{A(x+y)} = \inprod{x-y}{A(x-y)} = 0\). It follows at once that if \(A, B\in\R{\nu\times\nu}\) are two symmetric matrices and \(\inprod{x}{Ax} = \inprod{x}{Bx}\) for \(x\in\R{\nu}\), then \(A = B\).%
	}
	To this end, we already know that
	\begin{equation*}
		\text{for every \( x \in \R{n} \)}\quad J^x = \inprod{x}{G_{(A, B,\horizon)}^{-1} x},
	\end{equation*}
	so that it suffices now to verify that
	\begin{equation*}
		J^x = \inprod{x}{G_{\ftf}^{-1} x}.
	\end{equation*}
	Since the problems \eqref{eq:optimal-control-problem} and \eqref{eq:frame-optimal-control-problem-2} are equivalent, we conclude that $ J^x $ is the optimum value achieved in \eqref{eq:frame-optimal-control-problem-2} as well. It is a classical result (see, e.g., \cite[Section 6.11]{luenopti}) that \( \inprod{x}{\bigl(T_{\ftf}T^{*}_{\ftf}\bigr)^{-1}x} \) is the optimum value achieved in the problem \eqref{eq:frame-optimal-control-problem-2}. In view of the fact that $\bigl(T_{\ftf}T^*_{\ftf}\bigr) = G_{\ftf}$, we conclude that \eqref{eq:grammian-frame-operator-equivalence} holds.
\end{proof}

\begin{remark}
	The equality in \eqref{eq:grammian-frame-operator-equivalence} is true even if the sequence of vectors \( \ftf \) is not a frame of \( \R{n} \). Let \( x \in \R{n} \) be arbitrary, and define \( [0,1] \ni t \longmapsto z(t) \coloneqq \big( B\transp \exp{(\horizon - t)A\transp} x \big) \). Clearly, \( z \in \controlprofile \). We observe that
\begin{align*}
    \congrammian x &= \int_0^\horizon \exp{(\horizon - t)A} B \big( B\transp \exp{(\horizon - t)A\transp} x \big) \,\dd t\\
	& = \EPmap{\horizon} (z) = \sum_{i = 0}^{+\infty} \inprod{\orthobasiselement{i}}{z}_{\controlprofile} \EPmap{\horizon}(\orthobasiselement{i}) \\
	& = \sum_{i = 1}^{+\infty} \inprod{\orthobasiselement{i}}{z}_{\controlprofile} \ltiframeelement{i},
\end{align*}
and by definition we have
\begin{align*}
	\inprod{\orthobasiselement{i}}{z}_{\controlprofile} & = \int_0^\horizon \big( B\transp \exp{(\horizon - t)A\transp} x \big)\transp \orthobasiselement{i}(t) \,\dd t  = \inprod{x}{\EPmap{\horizon} (\orthobasiselement{i})} = \inprod{x}{\ltiframeelement{i}} = \inprod{\ltiframeelement{i}}{x}.
\end{align*}
Collecting the equalities above we get
\[
	\congrammian x = \sum_{i = 0}^{+\infty} \inprod{\ltiframeelement{i}}{x} \ltiframeelement{i} = G_{\ftf} x.
\]
\end{remark}

The equality in \eqref{eq:optimal-control-solution} tells us that the minimum control effort required to transfer the LTI system \eqref{eq:lti-system} is completely determined by its controllability Gramian. The following three MOQs proposed in \cite{ref:MulWeb-72} are based on this fact:
\begin{enumerate}[label=\textup{(\roman*)}, leftmargin=*, widest=iii, align=right]
	\item \( \trace \big( \congrammian^{-1} \big) \): This quantity is proportional to the average optimal control effort needed to transfer the system state from origin (i.e., \( x_0 = 0\)) to a random point that is uniformly distributed on the unit sphere. In fact, 
	\begin{align*}
		\frac{1}{\sysdim} \trace \big( \congrammian^{-1} \big) = \frac{\int_{\|x\| = 1} \inprod{x}{\congrammian^{-1} x} \dd x}{\int_{\|x\| = 1}\inprod{x}{x} \dd x}.
	\end{align*}
	\item \( \lambda_{\min}^{-1} \big( \congrammian \big) \): This quantity gives the maximum control effort needed to transfer the system state from origin to any point on the unit sphere, which easily follows from the fact that
	\begin{align*}
		\lambda_{\min}^{-1} \big( \congrammian \big) = \lambda_{\max} \big( \congrammian^{-1} \big) = \max_{\|x\|=1} \inprod{x}{\congrammian^{-1} x}.
	\end{align*}
	\item \( \det \big( \congrammian \big) \): This quantity is proportional to the volume of the ellipsoid containing points to which the system state can be transferred to from the origin using at most unit control effort. Indeed,
	\begin{align*}
		\text{Vol} \Big( \aset{ x \in \R{\sysdim} \suchthat \inprod{x}{\congrammian^{-1} x} \leq 1} \Big) \ \propto \ \sqrt{\det(\congrammian)}.
	\end{align*}
\end{enumerate}

The LTI system \eqref{eq:lti-system} is controllable if and only if the Gramian $\congrammian$ is invertible. So $\trace \big( \congrammian^{-1} \big)$ is well-defined only when the system is controllable. Similarly, $\lambda_{\min}^{-1} \big( \congrammian \big)$ attains a finite value and $\det \big( \congrammian \big)$ has a non-zero value if and only if the system is  controllable. Therefore all the three quantities defined above give significant information regarding the minimum energy state transfer problem and can be used to clearly distinguish between controllable and uncontrollable systems. So it is reasonable to say that all three of the values given above are valid measures of quality of the LTI system \eqref{eq:lti-system}. In view of this fact, from now on we will refer to the above three MOQs as the classical MOQs. Since all properties of the LTI system \eqref{eq:lti-system} that we are interested in this article are determined completely by the frame $(\ltiframeelement{i})_{i \in \N}$, including the definition of the classical MOQs, we will use the terms MOQ of the LTI system \eqref{eq:lti-system} and MOQ of the frame generated by the LTI system interchangeably. We mention that the three classical MOQs are generally not correlated in any way; one could increase any one of them arbitrarily while keeping the value of the other fixed. We would also like to point out that to define the classical MOQ (ii), unlike (i) and (iii), we do not require any notion of a volume on the state space. Hence it can be readily extended with some minor modifications to the case of infinite dimensional linear systems. We refer the reader to \cite{ref:Lio-94} for a discussion on such an extension. 

\section{Optimization of the classical MOQs}
\label{sec: optimization}

In this section we find what the orientation of the frame vectors generated by an 
LTI system should be so that the LTI system is best in terms of each of the 
three classical MOQs mentioned above. In other words, we minimize the three classical MOQs with
respect to the frame vectors subject to the constraint that the length of each
of the frame vector is kept fixed. 

Let \( ( \alpha_i )_{i \in \N} \) be a sequence satisfying
\begin{equation}
	\label{e:key}
	\begin{cases}
		\alpha_i > 0\text{ for each }i,\\
		(\alpha_i)_{i\in\N} \text{ is non-increasing}, \sum_{i = 1}^{+\infty} \alpha_i < +\infty,\text{ and}\\
		\alpha_1 \leq \frac{1}{n} \sum_{i = 1}^{+\infty} \alpha_i \teL \A.
	\end{cases}
\end{equation}
In order to find the optimal orientation of the vectors of the frame, we optimize the three objective functions  \( \trace \big( G_{ \ftf }^{-1} \big) \), \( \lambda_{\min}^{-1} \big( G_{ \ftf } \big) \) and \( \det \big( G_{ \ftf } \big) \), subject to the constraint that the lengths of the vectors are fixed, i.e., \( \langle v_i, v_i \rangle = \alpha_i \) for all \( i \in \N \). Thus, we have the following three optimization problems:
\begin{equation}
\label{eq:trace-MOQ-optimization}
\begin{aligned}
\begin{cases}
   \underset{ ( v_i )_{i \in \N} }{\text{minimize}}
   & \trace ( G_{ ( v_1, \ldots, v_K ) }^{-1} )  \\
   \text{subject to}
   & \langle v_i, v_i \rangle = \alpha_i \quad \text{for all \( i \in \N \)}, \\
   & \Span\ftf = \R{n},
\end{cases}
\end{aligned}
\end{equation}

\begin{equation}
\label{eq:min-eigenvalue-MOQ-optimization}
\begin{aligned}
\begin{cases}
   \underset{ ( v_i )_{i \in \N }} {\text{minimize}}
 & \lambda_{\min}^{-1} ( G_{ \ftf } )  \\
   \text{subject to}
 & \langle v_i, v_i \rangle = \alpha_i\quad \text{for all \( i \in \N \)}, \\
 & \Span\ftf = \R{n},
\end{cases}
\end{aligned}
\end{equation}
and
\begin{equation}
\label{eq:det-MOQ-optimization}
\begin{aligned}
\begin{cases}
   \underset{ ( v_i )_{i \in \N} }{\text{maximize}}
 & \det( G_{ \ftf })  \\
   \text{subject to}
 & \langle v_i, v_i \rangle = \alpha_i\quad \text{for all \( i \in \N \)}, \\
 & \Span\ftf = \R{n}.
\end{cases}
\end{aligned}
\end{equation}

Surprisingly, all the three problems \eqref{eq:trace-MOQ-optimization}, \eqref{eq:min-eigenvalue-MOQ-optimization} and \eqref{eq:det-MOQ-optimization} have unique optimizers and they coincide, which is the content of the following theorem:

\begin{theorem}
	\label{th: frames-optimality}
	The optimal values of the three optimization problems \eqref{eq:trace-MOQ-optimization}, \eqref{eq:min-eigenvalue-MOQ-optimization} and \eqref{eq:det-MOQ-optimization} are attained when $G_{\ftf} = G\opt = \A I_n$, where $\A \Let \frac{1}{\sysdim} \sum_{i = 1}^{+\infty} \alpha_i$. Moreover, \(G\opt\) is the unique optimizer of each of these problems.
\end{theorem}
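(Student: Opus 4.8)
The plan is to observe that all three objective functions in \eqref{eq:trace-MOQ-optimization}--\eqref{eq:det-MOQ-optimization} depend on a feasible frame $\ftf$ only through the \emph{eigenvalues} of its frame operator $G_{\ftf}$, and that the norm constraints pin down the \emph{sum} of those eigenvalues. This reduces each of the three problems to an elementary finite-dimensional optimization over the spectrum of $G_{\ftf}$, after which the only substantive issue is feasibility of the claimed optimizer.

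First I would record that for any frame feasible for these problems the spanning condition $\Span\ftf = \R{n}$ forces $G_{\ftf}\in\posdef{n}$: by Theorem \ref{th: lti frame} the frame operator is a well-defined symmetric positive semi-definite $n\times n$ matrix, and it is invertible precisely when the vectors span $\R{n}$. Writing $\lambda_1 \geq \cdots \geq \lambda_n > 0$ for its eigenvalues, the constraints $\inprod{v_i}{v_i} = \alpha_i$ give
\[
	\sum_{i=1}^{n}\lambda_i = \trace(G_{\ftf}) = \sum_{i=1}^{+\infty}\inprod{v_i}{v_i} = \sum_{i=1}^{+\infty}\alpha_i = n\A,
\]
a quantity that is \emph{constant} over the entire feasible set. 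Thus the spectrum of every feasible $G_{\ftf}$ lies on the simplex $\{\lambda_i>0,\ \sum_i\lambda_i = n\A\}$, and the three objectives read $\trace(G_{\ftf}^{-1}) = \sum_i 1/\lambda_i$, $\lambda_{\min}^{-1}(G_{\ftf}) = 1/\min_i\lambda_i$, and $\det(G_{\ftf}) = \prod_i\lambda_i$. On this simplex I would then apply three classical inequalities: the AM--HM inequality gives $\sum_i 1/\lambda_i \geq n^2/\sum_i\lambda_i = n/\A$; the elementary bound $\min_i\lambda_i \leq \tfrac{1}{n}\sum_i\lambda_i = \A$ gives $\lambda_{\min}^{-1}(G_{\ftf}) \geq 1/\A$; and the AM--GM inequality gives $\prod_i\lambda_i \leq \A^{n}$. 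In each case equality holds if and only if $\lambda_1 = \cdots = \lambda_n = \A$, and since $G_{\ftf}$ is symmetric, equal eigenvalues force $G_{\ftf} = \A I_n$. This simultaneously delivers the optimal values $n/\A$, $1/\A$, $\A^n$ and shows that $\A I_n$ is the \emph{only} frame operator attaining any of them, giving uniqueness of the optimizer at the level of the operator $G_{\ftf}$.

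The one genuinely nontrivial point, and the step I expect to be the main obstacle, is \textbf{feasibility}: the inequalities above are merely bounds unless I can exhibit at least one frame with $\inprod{v_i}{v_i} = \alpha_i$ for every $i$ and $G_{\ftf} = \A I_n$. This is exactly the problem of constructing a \emph{tight} frame of $\R{n}$ with prescribed squared norms $(\alpha_i)_{i\in\N}$, and it is precisely here that the hypothesis \eqref{e:key} is used. That $\alpha_1 \leq \A$ is necessary is immediate, since any frame satisfies $\lambda_{\max}(G_{\ftf}) \geq \inprod{v_1}{G_{\ftf}v_1}/\inprod{v_1}{v_1} = \sum_{j}\abs{\inprod{v_1}{v_j}}^2/\alpha_1 \geq \alpha_1$, so a tight frame (for which $\lambda_{\max} = \A$) with these norms can exist only when $\alpha_1 \leq \A$. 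I would then invoke the converse existence result from frame theory: for a non-increasing, summable sequence with $\alpha_1 \leq \A = \tfrac{1}{n}\sum_i\alpha_i$, a tight frame of $\R{n}$ with frame operator $\A I_n$ and these prescribed norms exists (a Schur--Horn / prescribed-norm tight-frame construction, obtainable for instance by an inductive truncation argument). Establishing this existence, rather than the optimization inequalities which are routine, is where the real work lies; once it is in hand the theorem follows, because $\A I_n$ is then feasible and, by the equality analysis above, the unique minimizer (respectively maximizer) of each objective.
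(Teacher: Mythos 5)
Your proposal is correct and follows essentially the same route as the paper: reduce each objective to the spectrum of $G_{\ftf}$ with $\trace(G_{\ftf})=n\A$ pinned by the norm constraints, apply elementary spectral inequalities whose equality cases force $G_{\ftf}=\A I_n$, and settle feasibility of $\A I_n$ via the Schur--Horn-type existence theorem for frames with prescribed norms (the paper's Lemma \ref{lemma:rod}), which is where the hypothesis $\alpha_1\le\A$ enters. The only real difference is that you dispatch the trace objective with the AM--HM inequality, making that case self-contained, whereas the paper instead cites the solution of an auxiliary trace-constrained problem from \cite{ref:SheCha-17}.
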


A proof of Theorem \ref{th: frames-optimality} is given at the end of this section. In all the above three problems, even though the optimization is carried out over the frame vectors $(\ltiframeelement{i})_{i \in \N}$, the objective function in each of these problems depend only on the frame operator and not on the vectors themselves. We will observe in the discussion below that the constraints can also be recast completely in terms of the frame operator. We start by recalling the following definition of majorization:

\begin{definition}
	\label{d:majorization}
	Consider a non-increasing finite sequence \( \eigval{} \coloneqq (\eigval{1}, \ldots, \eigval{n}) \) and for \(K\in\N\cup\{+\infty\}\) a sequence \( \alpha = (\alpha_i)_{i=1}^K\), with positive real numbers as their entries. We define the relation \( \eigval{} \succ  \alpha \) if the following two conditions hold:
	\begin{equation}
		\label{eq:rod-majorization-inequality}
		\left\{
		\begin{aligned}
			\sum_{i = 1}^m \eigval{i} & \geq \sum_{i = 1}^m \alpha_i \quad \text{for all \( m = 1, \ldots, n - 1 \), and} \\
			\sum_{i = 1}^n \eigval{i} & = \sum_{i = 1}^{K} \alpha_i.
		\end{aligned}
		\right.
	\end{equation}
\end{definition}
The conditions in \eqref{eq:rod-majorization-inequality} are analogues of the standard majorization conditions \cite[Chapter 1]{ref:MarOlkArn-11}. The following version of the Schur-Horn theorem establishes a connection between the preceding majorization relation and frame operators corresponding to given frames.

\begin{lemma}\cite[Theorem 4.7]{antezana2007schur}
	\label{lemma:rod}
	For any given sequence \( \alpha = (\alpha_i)_{i\in\N} \) of positive real numbers satisfying \eqref{e:key} and a symmetric and non-negative definite matrix \( G \in \posdef{n} \) with \( \N \cup \{+\infty\} \ni K \geq n \), the following statements are equivalent:
	\begin{itemize}[label=\(\circ\), leftmargin=*]
		\item There exists a sequence of vectors \( \ftf \subset \R{n}\) such that \( G = G_{\ftf} \) and \( \langle v_i, v_i \rangle = \alpha_i \) for all \( i = 1, 2, \ldots, K \).
		\item \( \eigval{}(G) \succ \alpha \).
	\end{itemize}
\end{lemma}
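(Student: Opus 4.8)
The plan is to recognise Lemma~\ref{lemma:rod} as an infinite-dimensional Schur--Horn theorem and to prove the two implications separately. The bridge between frames and the classical Schur--Horn picture is the \emph{Gram operator} $T_{\ftf}T_{\ftf}^{*}$ of the frame acting on $\ltwo{\R{}}$: writing $(\delta_i)_{i\in\N}$ for the standard orthonormal basis of $\ltwo{\R{}}$, one checks directly that $\inprod{\delta_i}{T_{\ftf}T_{\ftf}^{*}\delta_i}_{\ltwo{\R{}}} = \inprod{v_i}{v_i} = \alpha_i$, so the prescribed squared norms $\alpha_i$ are exactly the diagonal entries of $T_{\ftf}T_{\ftf}^{*}$. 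Since $T_{\ftf}T_{\ftf}^{*}$ and the frame operator $G_{\ftf} = T_{\ftf}^{*}T_{\ftf}$ share the same non-zero eigenvalues, and $T_{\ftf}T_{\ftf}^{*}$ is a positive operator of rank $n$ (padded by zeros), the task becomes: characterise the diagonals $\alpha$ that a positive rank-$n$ operator on $\ltwo{\R{}}$ with prescribed eigenvalue list $\eigval{}(G)$ can exhibit. The assertion is that the admissible diagonals are precisely those with $\eigval{}(G) \succ \alpha$.

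For the forward implication (existence of such a frame forces $\eigval{}(G)\succ\alpha$), the trace equality in \eqref{eq:rod-majorization-inequality} is immediate from $\sum_{i=1}^{+\infty}\alpha_i = \sum_{i=1}^{+\infty}\norm{v_i}^2 = \trace(T_{\ftf}T_{\ftf}^{*}) = \trace(G_{\ftf}) = \sum_{j=1}^{n}\eigval{j}(G)$, which is the same trace identity already used in the proof of Theorem~\ref{th: lti frame}. For the partial-sum inequalities I would invoke the Ky Fan maximum principle for the compact positive operator $T_{\ftf}T_{\ftf}^{*}$: for each $m \leq n-1$, the sum of its $m$ largest eigenvalues equals the maximum of $\trace(Q\,T_{\ftf}T_{\ftf}^{*}\,Q)$ over rank-$m$ orthogonal projections $Q$. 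Taking $Q$ to be the coordinate projection onto $\Span(\delta_1,\dots,\delta_m)$ yields $\sum_{i=1}^{m}\alpha_i \leq \sum_{i=1}^{m}\eigval{i}(G)$; here the hypothesis that $\alpha$ is non-increasing is exactly what guarantees that the first $m$ diagonal entries are the $m$ largest, so no reordering is needed.

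For the reverse implication I would argue by an explicit inductive construction of the vectors. First, diagonalising $G = U\,\mathrm{diag}(\eigval{1},\dots,\eigval{n})\,U\transp$ with $U$ orthogonal, it suffices to realise the diagonal matrix $\Lambda \coloneqq \mathrm{diag}(\eigval{1},\dots,\eigval{n})$: if $(w_i)$ has $\norm{w_i}^2 = \alpha_i$ and frame operator $\Lambda$, then $v_i \coloneqq U w_i$ has the same norms and frame operator $U\Lambda U\transp = G$. The vectors $w_i$ would be produced one index at a time while maintaining the positive-semidefinite residual $R_i \coloneqq \Lambda - \sum_{j<i} w_j w_j\transp$ (so $R_1 = \Lambda$). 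The crucial one-step lemma is that whenever $R_i \succeq 0$ and $\eigval{}(R_i) \succ (\alpha_i, \alpha_{i+1}, \dots)$, one can choose $w_i$ with $\norm{w_i}^2 = \alpha_i$, $R_i - w_i w_i\transp \succeq 0$, and $\eigval{}(R_i - w_i w_i\transp) \succ (\alpha_{i+1},\alpha_{i+2},\dots)$; here $w_i$ is taken in the span of two suitably chosen eigenvectors of $R_i$ (a single ``T-transform''), and the majorization hypothesis is precisely what certifies $\alpha_i \leq \eigval{1}(R_i)$ so that the step never stalls. The base case is the hypothesis $\eigval{}(\Lambda) = \eigval{}(G) \succ \alpha$. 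Because $\sum_{i}\alpha_i < +\infty$, the residual traces $\trace(R_i) = \sum_{j\geq i}\alpha_j$ tend to $0$, so $R_i \to 0$ and hence $\sum_{i}w_i w_i\transp = \Lambda$ exactly, with the off-diagonal entries settling in the limit.

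The main obstacle is the reverse implication, and specifically the one-step majorization-preserving extraction together with the passage to the infinite limit. Verifying that a rank-one piece of the prescribed length $\alpha_i$ can always be removed while keeping both positivity and the majorization invariant is the combinatorial heart of the Schur--Horn machinery, and the summability of $(\alpha_i)$ is what upgrades the procedure from a finite scheme to a genuinely infinite frame whose frame operator equals $G$ on the nose. (When $K$ is finite the identical peeling argument is the classical finite-dimensional Schur--Horn theorem and terminates after $K$ steps; it is the infinite case that requires the summability of $\alpha$.) This is why the result is quoted from \cite{antezana2007schur}, where these steps are carried out in full.
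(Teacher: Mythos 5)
The paper never proves Lemma~\ref{lemma:rod} at all: it is imported wholesale as Theorem~4.7 of \cite{antezana2007schur}, so there is no internal argument to compare yours against, and your attempt has to be judged as a standalone reconstruction of the cited result. On that footing, your Schur (forward) direction is essentially complete and correct: the diagonal of the Gram operator $T_{\ftf}T_{\ftf}^{*}$ on $\ltwo{\R{}}$ is $(\norm{v_i}^2)_{i\in\N} = \alpha$, the nonzero spectra of $T_{\ftf}T_{\ftf}^{*}$ and $G_{\ftf}$ coincide, the Ky Fan maximum principle applied to the finite-rank positive operator $T_{\ftf}T_{\ftf}^{*}$ with the coordinate projections gives the partial-sum inequalities, and the trace identity $\sum_{i=1}^{+\infty}\alpha_i = \trace\bigl(T_{\ftf}T_{\ftf}^{*}\bigr) = \trace\bigl(G_{\ftf}\bigr) = \sum_{j=1}^{n}\eigval{j}(G)$ gives the equality in \eqref{eq:rod-majorization-inequality}. (Monotonicity of $\alpha$ is not even needed there, since the coordinate projection argument bounds $\sum_{i\in S}\alpha_i$ for every $m$-element set $S$.)

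The genuine gap is in the Horn (reverse) direction: everything rests on your ``crucial one-step lemma,'' which you state but do not prove, and that lemma is not a routine verification --- it \emph{is} the theorem, in inductive form. To see that it is delicate, note that the natural choice of $w_i$ fails: take $n=2$, residual eigenvalues $\mu = (3,1)$ and prescribed norms $(2,2)$, so that $\mu \succ (2,2)$; peeling $w_1 = \sqrt{2}\,e_1$ along the top eigenvector leaves residual eigenvalues $(1,1)$, and $(1,1)$ does \emph{not} majorize $(2)$ (the first partial-sum inequality $1 \geq 2$ fails), so the induction stalls. A valid first vector does exist, but it must mix the two eigenvectors nontrivially, e.g.
\begin{equation*}
	w_1 w_1\transp = \begin{pmatrix} 3/2 & -\sqrt{3}/2 \\ -\sqrt{3}/2 & 1/2 \end{pmatrix},
\end{equation*}
which leaves residual eigenvalues $(2,0)$, after which $w_2$ is forced. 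So the content of the one-step lemma is exactly the inverse-eigenvalue problem for rank-one downdates: one must show (via Fan--Pall interlacing plus a combinatorial argument) that among all interlacing spectra with the correct trace there is one that preserves the majorization invariant, and in the case $K = +\infty$ one must additionally run the limiting argument you sketch. Your roadmap does match the strategy used in the literature, and your honest deferral of this step to \cite{antezana2007schur} leaves you in the same position as the paper itself; but as a self-contained proof the proposal is incomplete precisely at the step where all the difficulty of the Schur--Horn theorem is concentrated.
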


Lemma \ref{lemma:rod} states that the mapping
\begin{equation}
	\label{eq:rod-equivalence-map}
	\ltwo{\R{\sysdim}} \ni \ftf \longmapsto G_{ \ftf } \in \possemdef{\sysdim}, 
\end{equation}
establishes a correspondence between the sequence of vectors \( \ftf \) that are feasible for the optimization problems \eqref{eq:trace-MOQ-optimization}, \eqref{eq:min-eigenvalue-MOQ-optimization} and \eqref{eq:det-MOQ-optimization}, and the set of positive definite matrices $ G\in\possemdef{\sysdim}$ such that $\eigval{}(G) \succ \alpha$. The objective functions in each of those problems are also clearly invariant under the mapping \eqref{eq:rod-equivalence-map}. Consequently, we can recast the problems \eqref{eq:trace-MOQ-optimization}, \eqref{eq:min-eigenvalue-MOQ-optimization}, \eqref{eq:det-MOQ-optimization} into, respectively, the following equivalent problems over positive definite matrices:
\begin{equation}
\label{eq:trace-MOQ-optimization-1}
\begin{dcases}
	\minimize_{G \; \in \; \posdef{n}} & \trace ( G^{-1} )\\
	\sbjto & \eigval{}(G) \succ \alpha,
\end{dcases}
\end{equation}
\begin{equation}
\label{eq:min-eigenvalue-MOQ-optimization-1}
\begin{dcases}
	\minimize_{G \; \in \; \posdef{n}} & \lambda_{\min}(G)\inverse\\
	\sbjto & \eigval{}(G) \succ \alpha ,
\end{dcases}
\end{equation}
\begin{equation}
\label{eq:det-MOQ-optimization-1}
\begin{dcases}
	\minimize_{G \; \in \; \posdef{n}} & \det (G)\\
	\sbjto & \eigval{}(G) \succ \alpha.
\end{dcases}
\end{equation}

We now demonstrate one by one that the unique optimizer of the problems \eqref{eq:trace-MOQ-optimization-1}, \eqref{eq:min-eigenvalue-MOQ-optimization-1}, and \eqref{eq:det-MOQ-optimization-1} is \( G\opt \coloneqq \A I_n \). 

\begin{lemma}
\label{lemma:trace-moq}
	Given a sequence $\alpha = (\alpha_i)_{i\in\N}$ of real numbers satisfying \eqref{e:key}, $G\opt \coloneqq \A I_n$ is the unique optimizer of the optimization problem \eqref{eq:trace-MOQ-optimization-1}, where \(\A\) is the constant defined in Theorem \ref{th: frames-optimality}.
\end{lemma}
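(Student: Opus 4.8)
The plan is to reduce both the objective and the constraint to the spectrum of $G$ and then exploit the Schur-convexity of the resulting function of the eigenvalues. Since $\trace(G^{-1}) = \sum_{i=1}^{n} \eigval{i}(G)^{-1}$ depends on $G$ only through its eigenvalues, and the constraint $\eigval{}(G) \succ \alpha$ is likewise phrased purely in terms of the eigenvalues, the problem \eqref{eq:trace-MOQ-optimization-1} is equivalent to minimizing $\phi(\eigval{}) \Let \sum_{i=1}^{n} \eigval{i}^{-1}$ over non-increasing tuples $\eigval{} = (\eigval{1}, \ldots, \eigval{n})$ of positive reals subject to $\eigval{} \succ \alpha$. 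First I would record that $t \mapsto 1/t$ is strictly convex on $(0, +\infty)$, so that $\phi$ is a symmetric, strictly Schur-convex function (see \cite[Chapter 3]{ref:MarOlkArn-11}): whenever $\eigval{} \succ \mu$ one has $\phi(\eigval{}) \geq \phi(\mu)$, with equality if and only if $\eigval{}$ is a permutation of $\mu$.

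Next I would verify that the candidate spectrum $(\A, \ldots, \A)$ is feasible. The equality condition in Definition \ref{d:majorization} holds because $n\A = \sum_{i=1}^{+\infty}\alpha_i$ by the definition of $\A$. For each $m = 1, \ldots, n-1$, the non-increasing property of $(\alpha_i)_{i\in\N}$ together with the bound $\alpha_1 \leq \A$ from \eqref{e:key} yields $\frac{1}{m}\sum_{i=1}^{m}\alpha_i \leq \alpha_1 \leq \A$, i.e.\ $\sum_{i=1}^{m}\alpha_i \leq m\A$. Hence $(\A, \ldots, \A) \succ \alpha$, and the matrix $\A I_n$, whose eigenvalues are all equal to $\A$, is feasible for \eqref{eq:trace-MOQ-optimization-1}.

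The key structural fact I would then invoke is that the constant tuple $(\A, \ldots, \A)$ is the minimal element, in the majorization order, among all non-increasing positive $n$-tuples with fixed sum $n\A$: any feasible $\eigval{}$ has $\sum_{i=1}^{n}\eigval{i} = n\A$ by the equality in Definition \ref{d:majorization}, and the elementary estimate $\sum_{i=1}^{m}\eigval{i} \geq \frac{m}{n}\sum_{i=1}^{n}\eigval{i} = m\A$ gives $\eigval{} \succ (\A, \ldots, \A)$. Strict Schur-convexity of $\phi$ then forces $\phi(\eigval{}) \geq \phi(\A, \ldots, \A)$ for every feasible $\eigval{}$, with equality only when $\eigval{}$ is a permutation of $(\A, \ldots, \A)$, that is, only when $\eigval{} = (\A, \ldots, \A)$. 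Thus $(\A, \ldots, \A)$ is the unique optimal spectrum, and since a symmetric matrix all of whose eigenvalues equal $\A$ must equal $\A I_n$, it follows that $G\opt = \A I_n$ is the unique optimizer of \eqref{eq:trace-MOQ-optimization-1}.

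The only genuinely delicate points are the two applications of majorization, namely establishing $(\A, \ldots, \A) \succ \alpha$ for feasibility and $\eigval{} \succ (\A, \ldots, \A)$ for optimality; both hinge on correctly tracking the direction of the partial-sum inequalities in Definition \ref{d:majorization} and on carefully using the hypotheses in \eqref{e:key}, including the role of the infinite tail of $\alpha$ in the equality constraint. I expect the main obstacle to be making the strictness in the Schur-convexity bound precise, since the uniqueness claim rests on arguing that equality can hold only for a constant spectrum rather than for a nontrivial rearrangement.
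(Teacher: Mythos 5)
Your proof is correct, but it follows a genuinely different route from the paper. The paper proves Lemma \ref{lemma:trace-moq} by relaxation: it observes that the majorization constraint \(\eigval{}(G) \succ \alpha\) forces \(\trace(G) = n\A\), replaces \eqref{eq:trace-MOQ-optimization-1} by the trace-constrained problem \eqref{eq:trace-MOQ-optimization-different} over the larger feasible set \(\{G \in \posdef{n} : \trace(G) = n\A\}\), imports the fact that \(\A I_n\) is the unique optimizer of that relaxed problem from \cite{ref:SheCha-17}, and then checks that \(\A I_n\) is feasible for the original problem, so that optimality and uniqueness transfer. You instead work entirely at the level of spectra and use Schur-convexity: the two majorization computations you flag as delicate, namely \((\A,\ldots,\A) \succ \alpha\) (feasibility, via \(\tfrac{1}{m}\sum_{i=1}^m \alpha_i \leq \alpha_1 \leq \A\)) and \(\eigval{} \succ (\A,\ldots,\A)\) for every feasible non-increasing spectrum (via \(\sum_{i=1}^m \eigval{i} \geq \tfrac{m}{n}\sum_{i=1}^n \eigval{i}\)), are both correct, and the strict Schur-convexity of \(\eigval{} \mapsto \sum_{i=1}^n \eigval{i}^{-1}\) (from strict convexity of \(t \mapsto 1/t\), \cite[Chapter 3]{ref:MarOlkArn-11}) delivers optimality and uniqueness of the constant spectrum in one stroke; the passage back to matrices is immediate since \(\A I_n\) is the only symmetric matrix with all eigenvalues equal to \(\A\). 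Your argument is self-contained modulo standard majorization theory and avoids the external citation; it also has the advantage of uniformity, since the same Schur-convexity template (with \(t \mapsto -\log t\)) would dispatch the determinant problem \eqref{eq:det-MOQ-optimization-1}, for which the paper separately invokes the AM--GM inequality. What the paper's relaxation buys in exchange is that it sidesteps the strictness discussion entirely by leaning on a problem with a known unique closed-form solution.
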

\begin{proof}
	Let us consider the following optimization problem:
	\begin{equation}
		\label{eq:trace-MOQ-optimization-different}
		\left\{
		\begin{aligned}
			& \minimize_{G \; \in \; \posdef{n}} && \trace ( G\inverse )\\
			& \sbjto && \trace(G) = n \A.
		\end{aligned}
		\right.
	\end{equation}
	The optimization problem \eqref{eq:trace-MOQ-optimization-different} is the same as the problem in \cite[(22), p.\ 17]{ref:SheCha-17} with \( \Sigma_V = I_n \), and its unique solution is given in \cite[(25)]{ref:SheCha-17}. Therefore, from \cite{ref:SheCha-17} we conclude that \( G\opt \) is the unique optimal solution to the problem \eqref{eq:trace-MOQ-optimization-different}.
	
	It remains to establish an equivalence between \eqref{eq:trace-MOQ-optimization-different} and \eqref{eq:trace-MOQ-optimization-1}. From \eqref{eq:rod-majorization-inequality} we know that if $\eigval{}(G) \succ \alpha$, then 
	\begin{equation*}
		\trace(G) = \sum_{i = 1}^n \eigval{i} = \sum_{i = 1}^{+\infty} \alpha_i.
	\end{equation*}
	Therefore, the optimum value of \eqref{eq:trace-MOQ-optimization-different} is at most equal to the optimum value of \eqref{eq:trace-MOQ-optimization-1} (if it exists) since \eqref{eq:trace-MOQ-optimization-different} involves minimization of the trace MOQ over a larger set of positive definite matrices. However, observe that $\eigval{}(G\opt) \succ \alpha$, since
	\begin{align*}
		\sum_{i = 1}^m \eigval{i} &= m \A = \frac{m}{\sysdim} \sum_{i=1}^{+\infty} \alpha_{i} \geq \sum_{i = 1}^m \alpha_i, \text{ and}\\
		\sum_{i = 1}^n \eigval{i} &= n \A = \sum_{i=1}^{+\infty} \alpha_{i},
	\end{align*}
	which shows that $G\opt$ is also feasible for \eqref{eq:trace-MOQ-optimization-1}. Together with our earlier observation that $\trace(G\opt)$ is at most equal to the optimum value of \eqref{eq:trace-MOQ-optimization-1}, this implies that $G\opt$ is an optimizer of \eqref{eq:trace-MOQ-optimization-1}. Since $G\opt$ is the unique optimizer of $\eqref{eq:trace-MOQ-optimization-different}$, it is also the unique optimizer of \eqref{eq:trace-MOQ-optimization-1}, completing our proof.
\end{proof}

\begin{lemma}
\label{lemma:min-eigenvalue-moq}
	Given a sequence $\alpha = (\alpha_i)_{i\in\N}$ of real numbers satisfying \eqref{e:key}, $G\opt \coloneqq \A I_n$ is the unique optimizer of the optimization problem \eqref{eq:min-eigenvalue-MOQ-optimization-1}, where \(\A\) is the constant defined in Theorem \ref{th: frames-optimality}.
\end{lemma}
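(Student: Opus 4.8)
The plan is to convert the problem into a statement about the smallest eigenvalue and then exploit the fact that the equality condition in the majorization relation fixes the trace of every feasible matrix. Minimizing $\lambda_{\min}(G)\inverse$ over the feasible set is the same as maximizing $\lambda_{\min}(G) = \eigval{n}(G)$, the smallest eigenvalue of $G$ (recall the eigenvalue sequence $\eigval{}(G) = (\eigval{1}(G), \ldots, \eigval{n}(G))$ is arranged in non-increasing order, so $\eigval{n}(G)$ is the minimum). The whole argument then rests on the elementary observation that the minimum of a finite list of numbers never exceeds their average.

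First I would record that for \emph{any} feasible $G \in \posdef{n}$, the second condition of \eqref{eq:rod-majorization-inequality} gives $\sum_{i=1}^{n} \eigval{i}(G) = \sum_{i=1}^{+\infty} \alpha_i = \sysdim\A$, so the $\sysdim$ eigenvalues of $G$ have average exactly $\A$. Since $\eigval{n}(G)$ is the smallest among them, $\eigval{n}(G) \leq \A$, and hence $\lambda_{\min}(G)\inverse \geq \A\inverse$ for every feasible $G$; this is a uniform lower bound on the objective of \eqref{eq:min-eigenvalue-MOQ-optimization-1}. Next I would check that the bound is attained at $G\opt = \A I_n$: its feasibility, namely $\eigval{}(G\opt) \succ \alpha$, was already verified in the proof of Lemma \ref{lemma:trace-moq}, and $\lambda_{\min}(G\opt) = \A$ yields $\lambda_{\min}(G\opt)\inverse = \A\inverse$, meeting the lower bound. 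Thus $G\opt$ is an optimizer.

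For uniqueness I would analyze the equality case. If a feasible $G$ attains the optimal value $\A\inverse$, then $\eigval{n}(G) = \A$; combined with the ordering $\eigval{1}(G) \geq \cdots \geq \eigval{n}(G) = \A$ and the fixed trace $\sum_{i=1}^{n} \eigval{i}(G) = \sysdim\A$, there is no slack left, so every $\eigval{i}(G)$ must equal $\A$. A symmetric positive definite matrix all of whose eigenvalues equal $\A$ is necessarily $\A I_n$, giving $G = G\opt$. I do not expect any genuine obstacle here; the argument is short and self-contained, with the only step warranting a moment's care being the uniqueness claim, where one must confirm that equality in ``minimum $\leq$ average'' truly collapses all eigenvalues to $\A$ — which it does precisely because the prescribed ordering together with the fixed trace forces it.
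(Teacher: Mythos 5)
Your proof is correct and follows essentially the same route as the paper: both reduce to the eigenvalues, use the trace equality in the majorization relation to get $\lambda_{\min}(G) \leq \A$, verify feasibility of $G\opt$ via the computation from Lemma \ref{lemma:trace-moq}, and settle uniqueness by noting that equality forces all eigenvalues to equal $\A$, which for a symmetric matrix pins down $G = \A I_n$. No gaps.
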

\begin{proof}
	Both the objective function and the feasible set in \eqref{eq:min-eigenvalue-MOQ-optimization-1} are determined completely by the set of eigenvalues $(\eigval{1}, \ldots, \eigval{\sysdim})$ of $G$. Therefore, we consider the following optimization problem that is equivalent to \eqref{eq:min-eigenvalue-MOQ-optimization-1}:
	\begin{equation}
		\label{eq:min-eigenvalue-MOQ-optimization-2}
		\left\{
		\begin{aligned}
			& \minimize_{\substack{\eigval{i} > 0\\i=1, \ldots, \sysdim}} && \big( \min \{ \eigval{1}, \ldots, \eigval{n} \} \big)^{-1}\\
			& \sbjto && \eigval{} \succ \alpha. 
		\end{aligned}
		\right.
	\end{equation}
	For every sequence $ ( \eigval{1}, \ldots, \eigval{n} ) $ that is feasible for the optimization problem \eqref{eq:min-eigenvalue-MOQ-optimization-2}, we see that  \( \sum_{i = 1}^n \eigval{i} = \sum_{i = 1}^{+\infty} \alpha_i = n \A \). Therefore,
	\[
		\min \{ \eigval{1}, \ldots, \eigval{n} \} \leq \A,
	\]
	which implies that
	\[
		\big( \min \{ \eigval{1}, \ldots, \eigval{n} \} \big)\inverse \geq { \A }\inverse.
	\]
	Therefore, the value \( { \A }\inverse \) is at most equal to the optimal value (if it exists) of the optimization problem \eqref{eq:min-eigenvalue-MOQ-optimization-2}. Let us define \( \eigval{i}\opt \coloneqq \A \) for all \( i = 1,2,\ldots,n \). We observed in the proof of Lemma \ref{lemma:trace-moq} that
	\begin{equation}
		\label{eq:min-eigval-solution-2}
		(\eigval{1}\opt, \ldots, \eigval{\sysdim}\opt) \succ \alpha.
	\end{equation}
	We also know that
	\begin{equation}
		\label{eq:min-eigval-solution-3}
		\big( \min \{ \eigval{1}\opt, \ldots, \eigval{n}\opt \} \big)^{-1} = { \A }^{-1}.
	\end{equation}
	This means that $(\eigval{1}\opt, \ldots, \eigval{\sysdim}\opt)$ is feasible for the optimization problem \eqref{eq:min-eigenvalue-MOQ-optimization-2}, and together with our earlier observation that $\big( \min \{ \eigval{1}\opt, \ldots, \eigval{n}\opt \} \big)^{-1}$ is at most equal to the optimal value of \eqref{eq:min-eigenvalue-MOQ-optimization-2}, this implies that $( \eigval{1}\opt, \ldots, \eigval{n}\opt )$ is an optimizer of \eqref{eq:min-eigenvalue-MOQ-optimization-2}. It is easily seen that $( \eigval{1}\opt, \ldots, \eigval{n}\opt )$ is the unique sequence of positive real numbers that satisfies both \eqref{eq:min-eigval-solution-2} and \eqref{eq:min-eigval-solution-3}. Therefore, $( \eigval{1}\opt, \ldots, \eigval{n}\opt )$ is the unique optimizer of \eqref{eq:min-eigenvalue-MOQ-optimization-2}.

	In view of the equivalence between the optimization problems \eqref{eq:min-eigenvalue-MOQ-optimization-1} and \eqref{eq:min-eigenvalue-MOQ-optimization-2}, we conclude that a matrix $G$ is an optimizer of \eqref{eq:min-eigenvalue-MOQ-optimization-1} if and only if
	\begin{equation}
		\label{eq:min-eigval-solution-1}
		\eigval{i}(G) = \A \quad \text{for all } i = 1, 2, \ldots, \sysdim.
	\end{equation}
	It is well-known that $G = \A I_n$ is the only positive definite matrix that satisfies \eqref{eq:min-eigval-solution-1}, and, consequently, $G\opt = \A I_n$ is the unique optimizer of \eqref{eq:min-eigenvalue-MOQ-optimization-1}.
\end{proof}

\begin{lemma}
\label{lemma:det-moq}
	Given a sequence $\alpha = (\alpha_i)_{i\in\N}$ of real numbers satisfying \eqref{e:key}, $G\opt \coloneqq \A I_n$ is the unique optimizer of the optimization problem \eqref{eq:det-MOQ-optimization-1}, where \(\A\) is the constant defined in Theorem \ref{th: frames-optimality}.
\end{lemma}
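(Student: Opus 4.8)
The plan is to reuse the relaxation template already employed in the proofs of Lemmas~\ref{lemma:trace-moq} and~\ref{lemma:min-eigenvalue-moq}, with the determinant MOQ being maximized here exactly as in the original problem~\eqref{eq:det-MOQ-optimization}. First I would note that both the objective $\det(G)$ and the feasible set of~\eqref{eq:det-MOQ-optimization-1} are completely determined by the eigenvalues $\eigval{1}, \ldots, \eigval{n}$ of $G$, and that every feasible $G$ satisfies the trace equality $\sum_{i=1}^{n} \eigval{i} = \sum_{i=1}^{+\infty} \alpha_i = n\A$ furnished by the second line of~\eqref{eq:rod-majorization-inequality}. This lets me drop the partial-sum inequalities in the majorization constraint and retain only this single linear constraint.

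Next I would pass to the relaxed problem of maximizing $\prod_{i=1}^{n} \eigval{i}$ over all $\eigval{i} > 0$ subject to the single constraint $\sum_{i=1}^{n} \eigval{i} = n\A$. Its feasible set contains that of~\eqref{eq:det-MOQ-optimization-1}, so its optimal value is at least the optimal value of~\eqref{eq:det-MOQ-optimization-1}. By the AM--GM inequality, $\prod_{i=1}^{n} \eigval{i} \leq \big( \frac{1}{n} \sum_{i=1}^{n} \eigval{i} \big)^{n} = \A^{n}$, with equality if and only if $\eigval{1} = \cdots = \eigval{n} = \A$. Hence the relaxed problem has optimal value $\A^{n}$, attained uniquely at $\eigval{i} = \A$ for every $i$, which corresponds precisely to the matrix $G\opt = \A I_n$.

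Finally I would invoke the feasibility of $G\opt$ for~\eqref{eq:det-MOQ-optimization-1}, namely $\eigval{}(G\opt) \succ \alpha$, which was already established in the proof of Lemma~\ref{lemma:trace-moq}. Since $G\opt$ attains the upper bound $\A^{n}$ over the larger relaxed feasible set, it is optimal for~\eqref{eq:det-MOQ-optimization-1} as well, and because it is the unique maximizer even over that larger set it is the unique optimizer of~\eqref{eq:det-MOQ-optimization-1}. The argument is entirely routine and parallels the two preceding lemmas; the only points demanding care are the direction of the AM--GM bound (the product is maximized for a fixed sum, matching a \emph{maximization} of the determinant) and its equality case, which is exactly what delivers uniqueness, together with the observation that the relaxed maximizer automatically satisfies the full majorization constraint and so no feasibility is lost in the relaxation.
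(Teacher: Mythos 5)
Your proof is correct and follows essentially the same route as the paper's: reduce to the eigenvalues, use the trace equality $\sum_{i=1}^{n}\eigval{i}=n\A$ implied by the majorization constraint, apply the AM--GM inequality to bound the product by $\A^{n}$, and verify feasibility of the uniform eigenvalue vector via $\eigval{}(G\opt)\succ\alpha$ as in Lemma~\ref{lemma:trace-moq}. Your explicit appeal to the equality case of AM--GM for uniqueness is a slightly cleaner justification than the paper's, but the argument is the same.
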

\begin{proof}
	Once again, the objective function in \eqref{eq:det-MOQ-optimization-1} depends only on the eigenvalues of $G$, and so the problem can be recast as the following equivalent one:
	\begin{equation}
		\label{eq:det-MOQ-optimization-2}
		\left\{
		\begin{aligned}
			& \minimize_{\substack{\eigval{i} > 0\\ i=1, \ldots, \sysdim}} && \prod_{i = 1}^{\sysdim} \eigval{i}\\
			& \sbjto && \eigval{} \succ \alpha.
		\end{aligned}
		\right.
	\end{equation}
	By the arithmetic mean -- geometric mean inequality we know that for any sequence $(\eigval{1}, \ldots, \eigval{\sysdim})$ of positive numbers,
	\[
		\prod_{i = 1}^n \eigval{i} \leq \Big( \frac{\sum_{i = 1}^n \eigval{i}}{\sysdim} \Big)^{\sysdim} \leq \A^{\sysdim}.
	\]
	Therefore, the value \( \A^{\sysdim} \) is at least equal to the optimal value of the optimization problem \eqref{eq:det-MOQ-optimization-2} (if it exists). Let us define \( \eigval{i}\opt \coloneqq \A \) for all \( i = 1,2,\ldots,n \). We observed in the proof of Lemma \ref{lemma:trace-moq} that
	\begin{equation}
		\label{eq:det-eigval-solution-2}
		(\eigval{1}\opt, \ldots, \eigval{\sysdim}\opt ) \succ \alpha.
	\end{equation}
	We also know that
	\begin{equation}
		\label{eq:det-eigval-solution-3}
		\prod_{i = 1}^n \eigval{i}\opt = \Big( \frac{\sum_{i = 1}^n \eigval{i}\opt}{\sysdim} \Big)^{\sysdim} = \A^{\sysdim}.
	\end{equation}
	This means that $(\eigval{1}\opt, \ldots, \eigval{\sysdim}\opt)$ is feasible for the optimization problem \eqref{eq:det-MOQ-optimization-2}. Together with our earlier observation that $\prod_{i = 1}^n \eigval{i}$ is at most equal to the optimal value of \eqref{eq:det-MOQ-optimization-2}, this implies that $( \eigval{1}\opt, \ldots, \eigval{n}\opt )$ is an optimizer of \eqref{eq:det-MOQ-optimization-2}. As remarked earlier, $( \eigval{1}\opt, \ldots, \eigval{n}\opt )$ is the unique sequence of positive real numbers that satisfies both \eqref{eq:det-eigval-solution-2} and \eqref{eq:det-eigval-solution-3}. Therefore, $( \eigval{1}\opt, \ldots, \eigval{n}\opt )$ is the unique optimizer of \eqref{eq:det-MOQ-optimization-2}.

	In view of the equivalence between the optimization problems \eqref{eq:det-MOQ-optimization-1} and \eqref{eq:det-MOQ-optimization-2}, we conclude that a matrix $G$ is an optimizer of \eqref{eq:min-eigenvalue-MOQ-optimization-1} if and only if
	\begin{equation}
		\label{eq:det-eigval-solution-1}
		\eigval{i}(G) = \A \quad \text{for all } i = 1, 2, \ldots, \sysdim.
	\end{equation}
	Since $G\opt = \A I_n$ is the only matrix that satisfies \eqref{eq:det-eigval-solution-1}, it is the unique optimizer of \eqref{eq:det-MOQ-optimization-1}.
\end{proof}

\begin{proof}[Proof of Theorem \ref{th: frames-optimality}]
	The assertion follows from Lemmas \ref{lemma:trace-moq}, \ref{lemma:min-eigenvalue-moq}, and \ref{lemma:det-moq}.
\end{proof}

\section{Frame-theoretic MOQ}
\label{sec: frame MOQ}
We have seen in the preceding section that all the three classical MOQs attain their optimal value precisely when the frame generated by the LTI system is tight. In addition, we discussed in the introduction how tight frames posses several additional desirable properties for representation of generic vectors. Thus, a measure of tightness of the frame $\ftf$, where the vectors \(v_i\) are defined by \eqref{eq: lti frame}, is a valid measure of quality of the LTI system \eqref{eq:lti-system}. 

\begin{definition}
	\label{d: moq}
	We define the quantity 
	\begin{equation*}
		\moq{A}{B}{T} \Let \frac{\trace{\bigl(\grammian{A}{B}{\horizon}}\bigr)}{\sqrt{\trace{\bigl(\grammian{A}{B}{\horizon}^2}\bigr)}},
	\end{equation*}
	which measures the extent of tightness of the frame generated by the LTI system \eqref{eq:lti-system}, as a measure of the quality of \eqref{eq:lti-system}. In the subsequent discussions we will refer to the MOQ defined in Definition \ref{d: moq} as the \emph{frame-theoretic MOQ}.
\end{definition}

Recall that for $A_1, A_2 \in \mat{\sysdim}{\sysdim}$, 
\begin{equation}
	\label{eq: inner prod}
	\inprod{A_1}{A_2}_{\mat{\sysdim}{\sysdim}} \Let \trace({A_1\transp A_2})
\end{equation}
defines an inner product on $\mat{\sysdim}{\sysdim}$ under which it is a Hilbert space. Since $\grammian{A}{B}{\horizon}$ is symmetric, we have
\begin{align*}
	\trace{\bigl(\grammian{A}{B}{\horizon}\bigr)} &= \trace{\bigl(I_{\sysdim}\transp \grammian{A}{B}{\horizon}\bigr)} = \inprod{\grammian{A}{B}{\horizon}}{I_{\sysdim}}_{\mat{\sysdim}{\sysdim}}, \text{ and}\\
	\trace{\bigl(\grammian{A}{B}{\horizon}^2\bigr)} &= \trace{\bigl(\grammian{A}{B}{\horizon}\transp \grammian{A}{B}{\horizon}\bigr)} = \inprod{\grammian{A}{B}{\horizon}}{\grammian{A}{B}{\horizon}}_{\mat{\sysdim}{\sysdim}}, 
\end{align*}
which in turn implies that
\begin{align}\label{eq: moq alt}
	\moq{A}{B}{\horizon} = \frac{\inprod{\grammian{A}{B}{\horizon}}{I_{\sysdim}}_{\mat{\sysdim}{\sysdim}}}{\sqrt{\inprod{\grammian{A}{B}{\horizon}}{\grammian{A}{B}{\horizon}}_{\mat{\sysdim}{\sysdim}}}}.
\end{align}

We had remarked in Definition \ref{d: moq} that the frame-theoretic MOQ \(\eta\) is a measure of tightness of the frame generated by the LTI system. From \eqref{eq: moq alt} we can see immediately that \(\eta\) is simply a multiple of the cosine of the angle that the Gramian $\grammian{A}{B}{\horizon}$ makes with the identity matrix $I_{\sysdim}$ in the Hilbert space $\mat{\sysdim}{\sysdim}$ equipped with the inner product \eqref{eq: inner prod}. Therefore, a higher value of the frame-theoretic MOQ \(\eta\) implies that $\grammian{A}{B}{\horizon}$ is more aligned with the identity matrix, which in turn means that the corresponding frame generated by the LTI system is tighter. In particular, this means the higher the value of the frame-theoretic MOQ, the better the LTI system is.

The following proposition provides some connections between the classical notion of controllability and certain properties of the frame-theoretic MOQ \(\eta\).

\begin{proposition}
	\label{p: moq properties}
	Consider the LTI system \eqref{eq:lti-system}. With $\moq{A}{B}{\horizon}$ being the frame-theoretic MOQ defined in Definition \ref{d: moq}, we have the following:
	\begin{enumerate}[label=\textup{(\roman*)}, leftmargin=*, widest=ii, align=right]
		\item \label{p:moq properties 1} For all $\horizon > 0$,
			\begin{equation}
				\label{eq:moq prop 1}
				\moq{A}{B}{\horizon} \leq \sqrt{\sysdim},
			\end{equation}
			and equality holds in \eqref{eq:moq prop 1} if and only if the frame \(\ftf\) generated by \eqref{eq:lti-system} at time $\horizon$ according to \eqref{eq: lti frame} is tight.
		\item \label{p:moq properties 2} If $\moq{A}{B}{\horizon} > \sqrt{d}$ for some positive integer \(d\le\sysdim\), then the dimension of the reachable subspace at time $\horizon$ is greater than $d$. In particular, if $\moq{A}{B}{\horizon} > \sqrt{\sysdim - 1}$, then \eqref{eq:lti-system} is controllable.
	\end{enumerate}
\end{proposition}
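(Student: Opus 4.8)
The plan is to pass to the eigenvalues of the controllability Gramian and reduce everything to a single application of the Cauchy--Bunyakovsky--Schwarz inequality. Write $G \Let \grammian{A}{B}{\horizon}$; by Proposition \ref{th: grammian-frame-eq} this is exactly the frame operator $G_{\ftf}$. Since $G \in \possemdef{\sysdim}$ is symmetric and non-negative definite, denote its eigenvalues by $\eigval{1} \ge \eigval{2} \ge \cdots \ge \eigval{\sysdim} \ge 0$, so that $\trace(G) = \sum_{i=1}^{\sysdim}\eigval{i}$ and $\trace(G^2) = \sum_{i=1}^{\sysdim}\eigval{i}^2$, giving
\[
	\moq{A}{B}{\horizon} = \frac{\sum_{i=1}^{\sysdim}\eigval{i}}{\sqrt{\sum_{i=1}^{\sysdim}\eigval{i}^2}}.
\]
We tacitly assume $G \neq 0$, so that the denominator is positive and $\eta$ is defined.

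For part \ref{p:moq properties 1}, I would apply Cauchy--Schwarz to the two $\sysdim$-vectors $(\eigval{1}, \ldots, \eigval{\sysdim})$ and $(1, \ldots, 1)$ to obtain $\bigl(\sum_i \eigval{i}\bigr)^2 \le \sysdim \sum_i \eigval{i}^2$, which rearranges at once to $\moq{A}{B}{\horizon} \le \sqrt{\sysdim}$. The equality case of Cauchy--Schwarz holds precisely when $(\eigval{1}, \ldots, \eigval{\sysdim})$ is proportional to $(1, \ldots, 1)$, i.e., all eigenvalues are equal to a common value $\mu$; since $G \ne 0$ forces $\mu > 0$, this is equivalent to $G = \mu I_{\sysdim}$ with $\mu > 0$. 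By the discussion following Definition \ref{d:frame operators}, the frame operator being a positive multiple of the identity is exactly the condition that the frame $\ftf$ be tight, which establishes the asserted equivalence and shows that equality automatically produces a genuine, hence spanning, tight frame.

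For part \ref{p:moq properties 2}, the key observation is that the same inequality is sharper if we count only the nonzero eigenvalues. Let $r$ be the number of strictly positive eigenvalues of $G$; this equals $\operatorname{rank}(G)$, and since $\image(G) = \image(\kalman) = \mathcal{R}$ we have $r = \dim\mathcal{R}$, the dimension of the reachable subspace at time $\horizon$. The $\sysdim - r$ zero eigenvalues contribute nothing to either sum, so Cauchy--Schwarz applied to the $r$ positive eigenvalues yields $\moq{A}{B}{\horizon} \le \sqrt{r} = \sqrt{\dim\mathcal{R}}$. Consequently, if $\moq{A}{B}{\horizon} > \sqrt{d}$ then $\sqrt{\dim\mathcal{R}} > \sqrt{d}$, and because $\dim\mathcal{R}$ is an integer we conclude $\dim\mathcal{R} > d$. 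Specializing to $d = \sysdim - 1$ forces $\dim\mathcal{R} = \sysdim$, i.e., the system \eqref{eq:lti-system} is controllable.

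The computations here are routine; the only points demanding care are the equality analysis in part \ref{p:moq properties 1} --- ruling out the degenerate case $G = 0$ and reading off from Cauchy--Schwarz that equality is equivalent to all eigenvalues coinciding --- and, in part \ref{p:moq properties 2}, the clean reduction of the bound to the nonzero eigenvalues together with the standard identification $\operatorname{rank}(G) = \dim\mathcal{R}$. I expect this last identification to be the main, though minor, obstacle, since it is the only place the argument reaches beyond the elementary inequality.
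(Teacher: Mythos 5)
Your proof is correct. For part \ref{p:moq properties 1} you take essentially the same route as the paper: the paper applies the Cauchy--Schwarz inequality to the pair $\grammian{A}{B}{\horizon}, I_{\sysdim}$ in the trace inner product on $\mat{\sysdim}{\sysdim}$, which after diagonalization is exactly your inequality $\bigl(\sum_i\eigval{i}\bigr)^2\le \sysdim\sum_i\eigval{i}^2$ with the same equality case $G=\mu I_{\sysdim}$. For part \ref{p:moq properties 2}, however, your argument is genuinely different and more elementary. The paper proves the refined bound $\moq{A}{B}{\horizon}\le\sqrt{\dim\mathcal{R}}$ by a detour through the normalized frame potential: it invokes the identities $\trace(G_{\ftf}^2)=\sum_{i,j}\abs{\inprod{v_i}{v_j}}^2$ and $\trace(G_{\ftf})=\sum_i\norm{v_i}^2$, observes that $\eta = 1/\sqrt{\NFP}$, and then applies the general inequality $\NFP\ge 1/\dim H$ to the frame vectors viewed inside the Hilbert space $\Span\ftf$, arguing by contradiction. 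You instead restrict the same Cauchy--Schwarz computation to the $r=\operatorname{rank}(G)$ strictly positive eigenvalues and use the standard identification $\image(\congrammian)=\mathcal{R}$, getting the bound directly with no contradiction argument and no auxiliary Hilbert space. Your route is shorter and self-contained; the paper's buys an explicit connection to the frame-potential literature (\cite{casazza2006physical}, \cite{benedetto2003finite}), which it wants for expository reasons. The one fact you should make sure to justify is $\operatorname{rank}(\congrammian)=\dim\mathcal{R}$ --- it follows because $\inprod{x}{\congrammian x}=\int_0^\horizon\norm{B\transp\exp{tA\transp}x}^2\,\dd t$, so $\ker\congrammian$ is exactly the orthogonal complement of the reachable subspace --- and your parenthetical appeal to integrality of $\dim\mathcal{R}$ is unnecessary, since $\sqrt{r}>\sqrt{d}$ already gives $r>d$.
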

\begin{proof}
	In view of \eqref{eq: moq alt}, the inequality \eqref{eq:moq prop 1} in Proposition \ref{p: moq properties} is just a consequence of the Cauchy-Schwartz inequality. To see this, observe that
	\begin{align*} 
		\moq{A}{B}{\horizon} &= \frac{\inprod{\grammian{A}{B}{\horizon}}{I_{\sysdim}}}{\sqrt{\inprod{\grammian{A}{B}{\horizon}}{\grammian{A}{B}{\horizon}}}} \\
		& \leq \sqrt{\inprod{I_{\sysdim}}{I_{\sysdim}}} = \sqrt{\sysdim}.
	\end{align*} 
	Since inequality \eqref{eq:moq prop 1} is simply the Cauchy-Schwartz inequality applied to the matrices $\grammian{A}{B}{\horizon}$ and $I_{\sysdim}$, equality holds if and only if $\grammian{A}{B}{\horizon} = \lambda I_{\sysdim}$, and this happens exactly when the frame $\ftf$ is tight. This proves the property \ref{p:moq properties 1}. 

	At this point we take a slight detour to observe that inequality \eqref{eq:moq prop 1} can be stated in a more general sense. Let $\ftf$ be a sequence of square summable elements from an $n$ dimensional Hilbert space $H_n$, not necessarily constituting a frame. From \cite[Lemma 1, p.\ 7]{casazza2008classes} we see that
	\begin{equation}
		\label{eq: nfp moq relation}
		\begin{aligned}
			\trace\bigl(G_{ \ftf }^2\bigr) &= \sum_{i = 1}^{+\infty} \sum_{j = 1}^{+\infty} \abs{\inprod{\ltiframeelement{i}}{\ltiframeelement{j}}}^2, \quad \text{and} \\
			\trace\bigl(G_{ \ftf }\bigr)   &= \sum_{i = 1}^{+\infty} \norm{v_i}^2.
	\end{aligned}
	\end{equation}
	We then define the Normalized Frame Potential (NFP) of the frame $\ftf$ by
	\begin{equation}
		\label{eq: nfp def}
		\NFP\big(\ftf\big) \Let \frac{\sum_{i = 1}^{+\infty} \sum_{j = 1}^{+\infty} \abs{\inprod{\ltiframeelement{i}}{\ltiframeelement{j}}}^2}{\big(\sum_{i = 1}^{+\infty} \norm{\ltiframeelement{i}}^2\big)^2}.
	\end{equation}
	Observe that by \eqref{eq: nfp moq relation} and \eqref{eq: nfp def},
	\begin{equation}
		\label{eq: nfp moq relation 2}
		\frac{\trace{\bigl(G_{\ftf}\bigr)}}{\sqrt{\trace{\bigl(G_{\ftf}^2}\bigr)}} = \frac{1}{\sqrt{\NFP\big(\ftf\big)}},
	\end{equation}
	and this along with \eqref{eq:moq prop 1} tells us that
	\begin{equation}
		\label{eq: nfp ineq}
		\NFP\big(\ftf\big) \geq \frac{1}{{n}}.
	\end{equation}
	a fact that has been proved independently in \cite{casazza2006physical}. We emphasize here that inequality \eqref{eq: nfp ineq} states that the NFP of a sequence of vectors is greater than or equal to the inverse of the dimension of the Hilbert space it belongs to. The Normalized Frame Potential has a clear physical interpretation outlined in \cite{casazza2006physical} and \cite{benedetto2003finite}.
	
	We shall now prove the property \ref{p:moq properties 2} by contradiction: Suppose that \ref{p:moq properties 2} is false, that is, assume that $\moq{A}{B}{\horizon} > \sqrt{d}$ for some positive integer \(d\le \sysdim\), and that the dimension of the reachable subspace at time $\horizon$ is at most equal to $d$. This means that $\dim\big(\Span\ftf\big) \leq d$, where $\ftf$ is the frame generated by the LTI system \eqref{eq:lti-system} according to \eqref{eq: lti frame}. However, $\Span\ftf$ endowed with the inner product induced from $\R{\sysdim}$ is a Hilbert space in its own right, and $\ftf$ is a subset of this Hilbert space. By its definition in Equation \eqref{eq: nfp def}, the \(\NFP\) of $\ftf$ as a subset of $\R{\sysdim}$ and that as a subset of $\Span\ftf$ are equal. Applying the inequality \eqref{eq: nfp ineq} to $\ftf$ as a subset of $\Span\ftf$, we get
	\[
		\NFP\big(\ftf\big) \geq \frac{1}{\dim\big(\Span\ftf\big)} \geq \frac{1}{d},
	\]
	and along with \eqref{eq: nfp moq relation 2} this implies that
	\[
		\moq{A}{B}{\horizon} \leq \sqrt{d}.
	\]
	This is a contradiction of our starting hypothesis, which proves the property \ref{p:moq properties 2}.
\end{proof}

We now discuss some features of the frame-theoretic MOQ that distinguish it from the three classical MOQs.

\begin{remark}
	When the LTI system \eqref{eq:lti-system} is uncontrollable, both $\trace \big( \congrammian^{-1} \big)$ and $\lambda_{\min}^{-1} \big( \congrammian \big)$ are undefined, and $\det \big( \congrammian \big)$ is zero. This means that as far as the three classical MOQs are concerned, all uncontrollable systems are equally bad. On the one hand and in contrast to the classical MOQs, the frame-theoretic MOQ \(\eta\) has the unique ability to distinguish between uncontrollable systems. In addition, the property \ref{p:moq properties 2} of Proposition \ref{p: moq properties} says that if the frame-theoretic MOQ is larger than a certain value, the dimension of the reachable subspace is guaranteed to be larger than a precise corresponding value. Therefore, increasing the frame-theoretic MOQ leads to an increase in the rank of the reachable subspace even if the system under consideration is uncontrollable. On the other hand, one drawback of the frame-theoretic MOQ \(\eta\) is that it cannot distinguish between controllable and uncontrollable systems completely. Indeed, even though the property \ref{p:moq properties 2} of Proposition \ref{p: moq properties} says that if $\moq{A}{B}{\horizon}$ is greater than $\sqrt{\sysdim - 1}$, then the system is controllable, it is possible that the system is controllable but $\moq{A}{B}{\horizon} \leq \sqrt{\sysdim - 1}$.
\end{remark}

\begin{remark}
	All three classical MOQs are increasingly hard to compute as the dimension \(\sysdim\) of the state space increases. However, once the Gramian $\grammian{A}{B}{\horizon}$ is known, evaluating the frame-theoretic MOQ \(\eta\) involves very little computation. Indeed, since $\grammian{A}{B}{\horizon}$ is symmetric, $\trace{\bigl(\grammian{A}{B}{\horizon}^2\bigr)}$ is simply the sum of squares of entries of $\grammian{A}{B}{\horizon}$ and therefore, computing this involves just $\sysdim^2$ multiplications and as many additions. Computing $\trace{\bigl(\grammian{A}{B}{\horizon}\bigr)}$ involves summing $\sysdim$ diagonal entries of $\grammian{A}{B}{\horizon}$. In total, consequently, all one needs is $\sysdim^2$ number of multiplications, $(\sysdim^2 + \sysdim)$ number of additions, and one square-root operation to compute the frame-theoretic MOQ. Of course, computing the Gramian itself requires the evaluation of the integral \eqref{eq:control-grammian}, which may be difficult for large scale systems. However, in the case where the system matrix $A$ is asymptotically stable, the infinite horizon Gramian 
	\begin{equation}\label{eq: inf grammian}
		G_{(A, B)} \Let \int_{0}^{+\infty} \exp{tA}BB\transp\exp{t A\transp} \,\dd t
	\end{equation}
	is well-defined, and can be computed easily by solving the Lyapunov equation
	\begin{equation}\label{eq: lyapunov cont}
		AG_{(A, B)} + G_{(A, B)}A\transp + BB\transp = 0.
	\end{equation}
	There are efficient numerical algorithms available for solving the Lyapunov equation, even for large scale systems.
\end{remark}

\begin{remark}
	We mention that quantity $\trace{\bigl(\grammian{A}{B}{\horizon}\bigr)}$, which is similar to our frame-theoretic MOQ, has appeared in the literature before; see, e.g., \cite{pasqualetti2014controllability, ref:SumCorLyg-16, ref:ZhaCor-17}. However, in these articles the usage of $\trace{\bigl(\grammian{A}{B}{\horizon}\bigr)}$ as a measure of quality was mainly motivated by two factors:
	\begin{enumerate}[label=\textup{(\roman*)}, leftmargin=*, widest=ii, align=right]
		\item the observation that
			\begin{equation}
				\label{eq:trace moq prop}
				\frac{\trace{\bigl(\grammian{A}{B}{\horizon}^{-1}\bigr)}}{\sysdim} \geq \frac{\sysdim}{\trace{\bigl(\grammian{A}{B}{\horizon}\bigr)}},
			\end{equation}
			and
		\item the fact that $\trace{\bigl(\grammian{A}{B}{\horizon}\bigr)}$ is submodular as a function of the columns of the $B$ matrix.
	\end{enumerate}
	The inequality \eqref{eq:trace moq prop} suggests that $\trace{\bigl(\grammian{A}{B}{\horizon}\bigr)}$ is inversely related to $\trace{\bigl(\grammian{A}{B}{\horizon}^{-1}\bigr)}$; increasing $\trace{\bigl(\grammian{A}{B}{\horizon}\bigr)}$ can potentially lead to a decrease in $\trace{\bigl(\grammian{A}{B}{\horizon}^{-1}\bigr)}$. This is, however, only a heuristic, and it has been observed that increasing the trace of the Gramian need not ensure controllability and often leads to a poor choice of system (see, e.g., \cite[Section 5]{pasqualetti2014controllability}).
\end{remark}

\begin{remark}\label{rem: submod}
	Often an engineer is faced with the task of selecting the columns of the matrix \(B\) of a particular system with fixed system matrix $A$ from a given finite set of vectors. In this situation, it is desirable to do so by maximizing some measure of quality of the resulting system. Since the set of possible choices is finite, this leads to a combinatorial optimization problem, and for large scale systems such problems may not be tractable. In such cases, submodularity \footnote{Let $V$ be a given finite set. A set function \(f: 2^V \rightarrow \R{}\) is called submodular if for all subsets $A \subset B \subset V$ and all elements $s \notin B$, it holds that $f(A \cup \{s\}) - f(A) \geq f(B \cup \{s\}) - f(B)$. In the context of the discussion in Remark \ref{rem: submod}, the set $V$ is the finite set of possible choices for columns of the matrix $B$, and the MOQ is called submodular if it is submodular as a function on the finite set of choices of columns for the $B$ matrix.}  is a property that plays an important role in combinatorial optimization similar to that of convexity in continuous optimization. In the presence of the property of submodularity, there exists efficient numerical algorithms with proven performance guarantees that can solve large scale combinatorial optimization problems. We refer the reader to \cite{ref:SumCorLyg-16} for more details on submodularity and its consequences on measures of quality. We mention here that the frame-theoretic MOQ does not posses the property of submodularity.
\end{remark}

\begin{remark}
	All of the theory developed in this article, including the measure of quality defined in this section can verbatim be extended to the case of discrete time LTI systems of the form 
	\[
		{}x(t+1) = A x(t) + B u(t) \quad \text{for } t = 0, 1, 2, \ldots
	\]
	In fact, we initiated the study of frame-theoretic measures of quality in the context of discrete time linear systems in \cite{ref:SheCha-17CDC}. In this case, the set of control profiles at horizon $\horizon\in\N$ is the set of $\horizon$-tuples of elements in $\R{\condim}$, identified with $\mat{\condim}{\horizon}$. One can endow this vector space of control profile with an inner product similar to that defined in \eqref{eq: inner prod} by replacing the integral with a sum. The minimum control effort problem can now be defined analogously and the definition of the controllability Gramian changes to
	\[
		\grammian{A}{B}{\horizon} \Let \sum_{t = 0}^{\horizon - 1} A^t B B\transp (A\transp)^t.
	\]
	In the case of stable systems, the infinite horizon Gramian is well-defined as
	\[
		G_{(A,B)} \Let \sum_{t = 0}^{+\infty} A^t B B\transp (A\transp)^t.
	\]
	The infinite horizon Gramian can be computed with relative ease as the solution of the equation
	\[
		G_{(A, B)} - AG_{(A, B)}A\transp = BB\transp.
	\]
\end{remark}


\bibliographystyle{amsalpha}
\bibliography{ref}

\end{document}